\documentclass{article}
\usepackage{graphicx} %
\usepackage{subcaption}

\usepackage{fullpage} %

\usepackage{amsmath}
\usepackage{amsfonts}
\usepackage{amssymb}
\usepackage{amsthm}
\usepackage{hyperref}
\usepackage[utf8]{inputenc}

\usepackage{bm}

\usepackage{multirow}

\usepackage{cleveref}

\usepackage[section]{algorithm}

\newtheorem{theorem}{Theorem}[section]
\newtheorem{lemma}[theorem]{Lemma}

\newtheorem{proposition}[theorem]{Proposition}

\newtheorem{definition}[theorem]{Definition}

\newtheorem{problem}[theorem]{Problem}%

\usepackage{verbatim, comment, color}
\usepackage{enumerate}
\usepackage{setspace}

\usepackage[dvipsnames]{xcolor}

\newcommand{\be}{\begin{equation}}
\newcommand{\ee}{\end{equation}}

\newcommand{\cP}{{\cal P}}
\newcommand{\cX}{{\cal X}}

\newcommand{\R}{{\mathbb R}}
\newcommand{\N}{{\mathbb N}}
\newcommand{\E}{\mathbb{E}}

\DeclareMathOperator{\proj}{proj}

\newcommand{\q}{\quad}

\newcommand{\nn}{\nonumber}

\numberwithin{equation}{section}
\numberwithin{theorem}{section}

\usepackage{authblk}

\title{%
Dual Attainment in Multi-Period Multi-Asset Martingale Optimal Transport and Its Computation}

\author[1]{Charlie Che\thanks{Authors are listed in alphabetical order by last name.}\thanks{charlie.che@jpmchase.com}}

\author[2]{Tongseok Lim$^\ast$\thanks{lim336@purdue.edu}}
\author[3]{Yue Sun$^\ast$\thanks{yue.sun@jpmorgan.com}}

\affil[1]{Quantitative Trading \& Research, JPMorganChase, New York, NY 10017, USA}
\affil[2]{Mitch Daniels School of Business, Purdue University, West Lafayette, Indiana 47907, USA}
\affil[3]{Global Technology Applied Research, JPMorganChase, New York, NY 10001, USA}

\begin{document}

\maketitle

\begin{abstract} 
We establish dual attainment for the multimarginal, multi-asset martingale optimal transport (MOT) problem, a fundamental question in the mathematical theory of model-independent pricing and hedging in quantitative finance. Our main result proves the existence of dual optimizers under mild regularity and irreducibility conditions, extending previous duality and attainment results from the classical and two-marginal settings to arbitrary numbers of assets and time periods. This theoretical advance provides a rigorous foundation for robust pricing and hedging of complex, path-dependent financial derivatives. To support our analysis, we present numerical experiments that demonstrate the practical solvability of large-scale discrete MOT problems using the state-of-the-art primal-dual linear programming (PDLP) algorithm. In particular, we solve multi-dimensional (or vectorial) MOT instances arising from the robust pricing of worst-of autocallable options, confirming the accuracy and feasibility of our theoretical results. Our work advances the mathematical understanding of MOT and highlights its relevance for robust financial engineering in high-dimensional and model-uncertain environments. 
\end{abstract}

\section{Introduction}
\label{sec:intro}
Martingale optimal transport (MOT) has emerged as a central topic in mathematical finance and probability, providing a rigorous framework for model-independent pricing and hedging of financial derivatives under marginal and martingale constraints. The MOT problem generalizes classical optimal transport (OT) by requiring the transport plan to respect the martingale property, which is fundamental for risk-neutral valuation in quantitative finance. While the classical OT problem, introduced by Monge and Kantorovich~\cite{monge1781memoire,kantorovich1942translocation}, has a rich mathematical theory and wide-ranging applications~\cite{Vi09,Sa15}, the martingale extension introduces significant new challenges, both theoretically and computationally.

A key question in MOT is the existence of dual optimizers, the so-called \emph{dual attainment} property. Dual attainment is not only of intrinsic mathematical interest, but also has direct implications for robust hedging: dual optimizers correspond to explicit sub- and super-hedging strategies using tradable instruments~\cite{BeHePe11,bj,bnt,blo}. While duality and dual attainment have been extensively studied in the classical OT and single-asset, two-period MOT settings~\cite{bj,bnt,blo,eglo}, the general case of multi-marginal, multi-asset MOT remains much less understood. Theoretical advances in this direction are crucial for the robust pricing and hedging of complex, path-dependent financial products, which often depend on the joint evolution of multiple assets over several time periods.

This paper is devoted to the theoretical development of dual attainment for the multimarginal, multi-asset MOT problem. We establish the existence of dual optimizers under mild regularity and irreducibility conditions, extending previous results to arbitrary numbers of assets and time periods. Our approach builds on and generalizes the techniques developed for the single-asset, two-period case, and provides a rigorous foundation for robust pricing and hedging in high-dimensional, path-dependent settings. The main results contribute to the mathematical understanding of MOT and open the door to further applications in quantitative finance.

While the primary focus of this work is theoretical, we complement our analysis with a numerical demonstration. Specifically, we apply a state-of-the-art large-scale linear programming solver, PDLP~\cite{applegate2021practical,hinder2023worst}, to solve discrete instances of the multimarginal MOT problem. To our knowledge, this is the first demonstration of PDLP applied to large-scale multimarginal MOT, and the results provide supporting evidence for the practical relevance and accuracy of our theoretical findings. The numerical experiments, including a case study on a worst-of autocallable option, are intended to illustrate the feasibility of solving high-dimensional MOT problems and to validate the dual attainment results in realistic financial scenarios. %

\paragraph{Contributions.}  The primary contribution of this paper is a rigorous proof of dual attainment for the multimarginal, multi-asset martingale optimal transport (MOT) problem under general conditions. We also present numerical illustrations based on PDLP as supporting evidence; however, the development of computational methodologies is not a focus of this work and serves only to corroborate the theoretical results. Overall, our findings advance the mathematical theory of MOT and provide a foundation for robust pricing and hedging in complex financial markets.

\paragraph{Organization.}
The remainder of the paper is organized as follows. 
Section~\ref{sec:lit-review} provides a review of the related literature on optimal transport, martingale optimal transport, and their connections to mathematical finance. 
Section~\ref{sec:financial-applications} discusses key applications of MOT in quantitative finance, motivating the theoretical developments of this work.
Section~\ref{MTintro} introduces the mathematical formulation of the multimarginal martingale optimal transport problem, including the market setting, the VMOT problem, and its dual.
Section~\ref{sec:theoretical-results} develops our main theoretical results, establishing duality and dual attainment in the multi-marginal, multi-asset setting, and highlights the key ideas underlying the proofs. 
Section~\ref{sec:numerical-methods} describes the numerical methodology, including the PDLP algorithm and the discrete MOT formulation, and Section~\ref{sec:numerical-results} reports numerical experiments that support the theoretical findings, with a focus on a worst-of autocallable option case study.
Section~\ref{sec:conclusion} concludes the paper and outlines directions for future research.
Finally, Section~\ref{sec:proofs} contains detailed proofs of the main theoretical results.

\subsection{Related Literature}
\label{sec:lit-review}

The classical OT problem has been extensively studied, with foundational contributions by Monge, Kantorovich, and later Villani~\cite{Vi09} and Santambrogio~\cite{Sa15}. The extension to martingale constraints was pioneered in~\cite{BeHePe11,bj}, with further developments on duality and dual attainment in~\cite{bnt,blo,eglo}. Multi-marginal and multi-asset MOT has been investigated in~\cite{cot19,nst20,GKL2,Lim23}, but dual attainment in the general setting remains an open and challenging problem. Applications in quantitative finance include model calibration~\cite{dl19,G20,glw19}, robust pricing and hedging~\cite{Ho11,eglo}, and the extraction of risk-neutral marginals from market data using the Breeden--Litzenberger formula~\cite{bl78}. Moreover, recent advances in large-scale linear programming, such as PDLP~\cite{applegate2021practical,hinder2023worst}, have enabled practical solution of high-dimensional MOT problems, which we adopt as our computational approach in this paper.

\subsection{Applications in Quantitative Finance}
\label{sec:financial-applications}

The martingale optimal transport (MOT) framework has found increasing relevance in quantitative finance, both as a tool for model calibration and for robust pricing and hedging of complex derivatives. In this section, we review key application domains, highlight recent advances, and illustrate how the multi-marginal, multi-asset MOT approach enables new capabilities for practitioners.

\subsubsection{Model Calibration via Martingale Optimal Transport}

A major application of MOT in finance is the calibration of market models to observed data, particularly volatility surfaces and marginal distributions inferred from vanilla option prices. Traditional calibration methods often rely on parametric models, which may fail to capture the full range of market-implied dynamics. MOT provides a non-parametric alternative, allowing practitioners to fit models that are consistent with observed marginals while imposing the martingale property required by risk-neutral pricing.

Recent works have leveraged MOT for volatility surface fitting and stochastic volatility model calibration~\cite{dl19,G20,glw19}. For example, Demarch and Labordere formulated volatility surface fitting as a MOT problem using entropic regularization, while Guyon addressed the joint calibration of SPX and VIX smiles. Guo, Loeper, and Wang extended these ideas to a general setting, enabling calibration with arbitrary cost functionals and constraints. The flexibility of the MOT framework allows for the incorporation of additional market observable information, such as intermediate marginals or observable non vanilla instruments, to further tighten calibration and pricing bounds~\cite{lut-tightening-2019,sester-intermediate-2023}.

\subsubsection{Robust Pricing and Hedging}

Beyond calibration, MOT has emerged as a powerful tool for robust pricing and hedging of exotic derivatives. The pioneering work of Hobson~\cite{Ho11} used the Skorokhod embedding problem to derive tight, model-independent price bounds for complex derivatives. In practice, however, price bounds generated by classical optimal transport are often too wide to be useful, motivating the imposition of additional constraints such as martingality and observable market data.

The dual formulation of the MOT problem is particularly significant for hedging, as dual attainment corresponds to the existence of optimal sub- or super-hedging strategies using tradable instruments. Establishing dual attainment thus provides theoretical justification for robust hedging approaches and informs the construction of practical hedging portfolios.

Most research to date has focused on single-asset, two-period payoffs due to their theoretical tractability. However, many popular structured products in the market are multi-asset and path-dependent. Extending dual attainment results to the vectorial, multi-marginal setting, as in this work, provides a foundation for robust pricing and hedging of such exotics.

\subsubsection{Multi-Asset Path-Dependent Products}

The MOT framework is especially well-suited for pricing and hedging multi-asset, path-dependent derivatives, such as autocallables, worst-of options, and basket products. These instruments often depend on the joint evolution of several underlyings over multiple time periods, and their payoffs are sensitive to both marginal distributions and the dependence structure among assets.

A key advantage of the MOT approach is its minimal reliance on assumptions about asset correlations or joint dynamics. By only requiring consistency with observed marginals and the martingale property, MOT produces price bounds that reflect genuine model uncertainty. These bounds can be further tightened by incorporating additional market information, such as prices of liquid correlation-sensitive products or intermediate marginals. For example, the pricing bounds generated from the vectorial multi-marginal MOT framework can be interpreted as the range of model uncertainty due to unknown correlations. By introducing constraints based on market prices of pairwise correlation products (e.g., call vs. call or put vs. put), practitioners can further refine these bounds. This flexibility opens the door to robust pricing and risk management for a wide variety of exotic products.

In summary, the martingale optimal transport framework offers a robust foundation for a wide range of financial applications, from model calibration to pricing and hedging of complex derivatives. In the following section, we turn to the numerical implementation and results that demonstrate the practical effectiveness of these methods.

\section{Problem Formulation}
\label{MTintro}

This section introduces the mathematical framework for martingale optimal transport in multi-asset, multi-period markets. We first describe the market setting and the available marginal information, then formulate the vectorial martingale optimal transport (VMOT) problem, and finally present its dual formulation together with its financial interpretation.

\subsection{Market Setting and Marginal Information}

Consider a financial market with $d$ underlying assets, whose price processes are denoted by $(X_{t,i})_{t \ge 0}$ for each asset $i \in [d] := \{1,2,\dots,d\}$. We do not assume that the joint law of all asset prices—often referred to as the \emph{market model}—is known, since such information cannot be fully inferred from observable market data. 

Instead, following the classical argument of Breeden and Litzenberger~\cite{bl78}, we assume that the market reveals the marginal distribution of each asset price at any fixed maturity $t>0$. That is, for each $t$ and $i$, the distribution ${\rm Law}(X_{t,i}) = \mu_{t,i} \in \cP(\R)$ is observed, where $\cP(\cX)$ denotes the set of probability measures on a space $\cX$. We consider a finite collection of maturities $0 < T_1 < T_2 < \dots < T_N$, and for notational convenience write $X_{t,i} := X_{T_t,i}$ and $X_t := (X_{t,1},\dots,X_{t,d})$. Throughout the paper, we assume $(X_t)_{t \in [N]}$ is an $\R^d$-valued martingale, consistent with the standard risk-neutral framework in mathematical finance. While the full joint distribution ${\rm Law}(X_1,\dots,X_N)$ remains unspecified, we assume that the collection of $Nd$ marginal distributions $\mu_{t,i} := {\rm Law}(X_{t,i})$ is fixed and known. This motivates the following admissible class.

\begin{definition}[Vectorial Martingale Transports]
Assume the marginal measures $(\mu_{t,i})_{t,i}$ have finite first moments. Let $\mu_t := (\mu_{t,1},\dots,\mu_{t,d})$ and $\mu := (\mu_1,\dots,\mu_N)$. The set of \emph{vectorial martingale transports} is defined by
\begin{align}\label{VMT}
{\rm VMT}(\mu) := \Big\{ \pi \in \cP(\R^{Nd}) \ \Big| \ 
&\pi = {\rm Law}(X), \quad \E_\pi[X_{t+1} \mid X_t] = X_t, \\
&{\rm Law}(X_{t,i}) = \mu_{t,i} \ \text{for all } t \in [N],\ i \in [d] \Big\}. \nonumber
\end{align}
\end{definition}

\subsection{Vectorial Martingale Optimal Transport (VMOT)}

Given a measurable cost (or payoff) function $c : \R^{Nd} \to \R$, the vectorial martingale optimal transport (VMOT) problem is defined as
\begin{align}\label{VMOT}
\min_{\pi \in {\rm VMT}(\mu)} \ \E_\pi[c(X)] 
\qquad \text{or} \qquad
\max_{\pi \in {\rm VMT}(\mu)} \ \E_\pi[c(X)].
\end{align}

In a financial interpretation, $c(X)$ represents the payoff of a path-dependent derivative determined by the entire price trajectory $X=(X_t)_{t=1}^N$. Since the true market model $\pi$ is unknown, all martingale measures consistent with the observed marginal information must be considered. The minimum and maximum values in \eqref{VMOT} therefore correspond to the lower and upper arbitrage-free bounds for the derivative price.

A key distinction between VMOT and classical optimal transport lies in the martingale constraint $\E_\pi[X_{t+1} \mid X_t]=X_t$. This constraint imposes that, for each asset $i$ and time $t$, the marginal distributions satisfy the \emph{convex order} condition
\[
\mu_{t,i} \preceq_c \mu_{t+1,i},
\]
meaning that $\mu_{t,i}(f) \le \mu_{t+1,i}(f)$ for all convex functions $f$ on $\R$, where $\mu(f):= \int f(x) d\mu(x)$. Strassen’s theorem~\cite{St65} guarantees that this condition is both necessary and sufficient for ${\rm VMT}(\mu)$ to be nonempty. Accordingly, we assume $\mu_{t,i} \preceq_c \mu_{t+1,i}$ for all $t < N$ and $i \in [d]$ throughout the paper.

\subsection{Dual Problem and Financial Interpretation}

The VMOT problem is an infinite-dimensional linear program and therefore admits a natural dual formulation. When \eqref{VMOT} is a minimization problem, the dual problem takes the form
\begin{equation}\label{dualproblem}
\sup_{(\phi,h) \in \Psi} \ \mu(\phi),
\end{equation}
where $\phi = (\phi_1,\dots,\phi_N)$ with $\phi_t = (\phi_{t,1},\dots,\phi_{t,d})$, each $\phi_{t,i}:\R \to \R\cup\{-\infty\}$, and
\[
\mu(\phi) := \sum_{t=1}^N \sum_{i=1}^d \int \phi_{t,i}\,d\mu_{t,i}.
\]
The process $h=(h_1,\dots,h_N)$, with $h_N \equiv 0$, consists of functions $h_{t,i}:\R^{td}\to\R$ representing dynamic trading strategies. The admissible class $\Psi$ consists of all $(\phi,h)$ such that $\phi_{t,i} \in L^1(\mu_{t,i})$, $h_{t,i}$ is bounded, and the following pathwise inequality holds:
\begin{align}\label{ptwiseineq}
\sum_{t=1}^N \sum_{i=1}^d \phi_{t,i}(x_{t,i})
+ h_{t,i}(x_1,\dots,x_t)\,(x_{t+1,i}-x_{t,i})
\le c(x),
\end{align}
for all $x=(x_1,\dots,x_N)\in\R^{Nd}$.

For any $\pi\in{\rm VMT}(\mu)$ and $(\phi,h)\in\Psi$, the martingale property implies
\begin{equation}\label{hedgeprice}
\mu(\phi)
= \E_\pi\!\left[
\sum_{t=1}^N \sum_{i=1}^d \phi_{t,i}(X_{t,i})
+ h_{t,i}(X_1,\dots,X_t)\,(X_{t+1,i}-X_{t,i})
\right].
\end{equation}
Thus the dual value represents the cost of a semi-static hedging strategy composed of European options $\phi$ and dynamic trading strategies $h$. The inequality \eqref{ptwiseineq} enforces that the portfolio subhedges the payoff $c$ pathwise. When \eqref{VMOT} is a maximization problem, the dual problem becomes $\inf_{(\phi,h)\in\Psi} \mu(\phi)$, with the inequality reversed (and $\phi_{t,i}$ taking on their values in $\R \cup \{+\infty\}$).

Having established the market framework, the VMOT formulation, and its dual interpretation, we now turn to our main theoretical results concerning duality and dual attainment, which form the foundation for robust pricing and hedging in multi-asset, multi-period markets.

\section{Main Theoretical Results}  
\label{sec:theoretical-results}

In this section, we present the central theoretical contributions of the paper. We first recall the duality result for the vectorial martingale optimal transport (VMOT) problem, then discuss the challenges surrounding dual attainment and state our main theorem, and finally give an intuitive explanation of the proof strategy, emphasizing the role of irreducibility and local $L^1$ bounds.

\subsection{Duality in Martingale Optimal Transport}

Under mild assumptions on the cost function $c$ and the marginal distributions, the primal and dual optimal values coincide (see, e.g., \cite{eglo,Z}):
\begin{align}\label{duality}
P(c) 
&:= \inf_{\pi \in \mathrm{VMT}(\mu)} \mathbb{E}_\pi [c(X)] \\
&= \sup_{(\phi, h) \in \Psi} \mu(\phi) =: D(c). \nonumber
\end{align}
Furthermore, the primal problem is known to admit a minimizer; that is, there exists a martingale transport plan $\pi \in \mathrm{VMT}(\mu)$ such that $\mathbb{E}_\pi[c(X)] = P(c)$. Such martingale transports represent extremal market models for asset price evolution, optimizing the expected payoff of the derivative $c$.

\subsection{Dual Attainment: Existence and Challenges}

In contrast to the primal problem, whose solvability follows from standard compactness arguments under mild regularity conditions on $c$, proving \emph{dual attainment}—existence of an optimal dual pair $(\phi,h)$—is considerably more delicate. This difficulty already appears in classical optimal transport, as exemplified by Brenier’s work \cite{br}. In the martingale setting, the additional martingale constraint significantly complicates matters. As shown in \cite{bj,blo,bnt}, even in the single-asset, two-period case $(d,N)=(1,2)$, dual attainment may fail within the natural class $\Psi$, and positive results typically rely on careful analysis of limiting convex potentials.

This issue is not merely technical. In financial terms, dual optimizers correspond to optimal sub-/super-hedging strategies and encode important structural information about primal solutions, i.e., extremal market models. For this reason, the dual attainment problem has attracted substantial attention, though most of the literature focuses on the single-asset, two-period framework. Notable exceptions include \cite{cot19,nst20,os}, which treat multi-period single-asset settings, and \cite{d18-1,d18,dt19,GKL2,Lim23,os17}, which study vector-valued martingale transports mostly in two-period models.

\subsection{Main Result: Dual Attainment for Multi-Asset, Multi-Period MOT}

Our main theoretical result establishes dual attainment for the martingale optimal transport problem with an arbitrary number of time steps $N$ and assets $d$. This extension is particularly relevant for financial applications in which derivative payoffs depend on the full price path of multiple underlyings.

\begin{theorem} \label{main}
Let $(\mu_{t,i})_{t \in [N]}$ be an irreducible sequence of marginal distributions on $\mathbb{R}$ for each $i \in [d]$. Suppose $c:\R^{Nd}\to\R$ is a lower semicontinuous cost function such that
\[
|c(x)| \;\le\; \sum_{t=1}^N \sum_{i=1}^d v_{t,i}(x_{t,i})
\]
for some continuous functions $v_{t,i} \in L^1(\mu_{t,i})$. Then there exists a dual optimizer, that is, a family of functions
\[
(\phi,h) = (\phi_{t,i}, h_{t,i})_{t \in [N],\, i \in [d]}
\]
satisfying the pathwise inequality \eqref{ptwiseineq}, and such that, for every primal optimizer $\pi \in \mathrm{VMT}(\mu)$ solving \eqref{VMOT}, we have the pathwise equality
\begin{align}\label{ptwiseeq}
\sum_{t=1}^N \sum_{i=1}^d  \phi_{t,i}(x_{t,i})
+ h_{t,i}(x_1,\dots,x_t)\,(x_{t+1,i} - x_{t,i})
= c(x)
\quad \text{$\pi$-a.s.}
\end{align}
The dual optimizer need not belong to the class $\Psi$.
\end{theorem}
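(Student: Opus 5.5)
We follow the compactness strategy used for $(d,N)=(1,2)$ in \cite{bnt,blo} and for the vectorial two-period case in \cite{GKL2,Lim23}. We take an approximating sequence of admissible dual pairs, normalize them, derive local $L^1$ bounds from irreducibility, and pass to a pointwise limit. By the duality \eqref{duality}, there is a sequence $(\phi^n,h^n)\in\Psi$ with $\mu(\phi^n)\uparrow P(c)$. Fix a primal optimizer $\pi$. Define the slack
\[
\zeta^n(x):=c(x)-\sum_{t,i}\big[\phi^n_{t,i}(x_{t,i})+h^n_{t,i}(x_1,\dots,x_t)(x_{t+1,i}-x_{t,i})\big]\ \ge 0 .
\]
Then \eqref{hedgeprice} gives $\E_\pi[\zeta^n]=P(c)-\mu(\phi^n)\to 0$. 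Passing to a subsequence, $\zeta^n\to0$ in $L^1(\pi)$ and $\pi$-a.s.

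First we normalize. The pathwise inequality is invariant under adding affine functions $a_{t,i}+b_{t,i}x$ to $\phi_{t,i}$ and compensating with constant shifts of $h$ (telescoping $b_{t,i}x_{t,i}$). It is also invariant under constants that sum to zero. We use this freedom to fix $\phi^n_{t,i}$, and one-sided derivatives, at a reference point in each irreducible component. We also absorb $v_{t,i}$ into $\phi$, which reduces to bounded $c$ up to integrable terms.

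The key step, and the main obstacle, is a \emph{local $L^1$ (and pointwise) bound} on $\phi^n_{t,i}$ that is uniform in $n$ on compact subsets of the irreducible domain $I_{t,i}$. We would work backward in time and asset by asset. Freeze all coordinates except $(x_{t,i},x_{t+1,i})$ along a $\pi$-typical path. The pathwise inequality then gives a one-dimensional two-period inequality of the form
\[
\phi^n_{t,i}(x)+\phi^n_{t+1,i}(y)+h(x)(y-x)\le C+(\text{terms in other coordinates}).
\]
By the single-asset argument of \cite{bnt}, irreducibility of $\mu_{t,i}\preceq_c\mu_{t+1,i}$ then yields a concave/convex envelope $\chi^n$ with $\phi^n_{t+1,i}+\chi^n$ convex and locally bounded. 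The difficulty is controlling the other coordinates. We would try to handle this by averaging against conditional laws of $\pi$ (disintegration), using $\E_\pi[\zeta^n]\to0$ together with Fubini. This bounds the cross terms in $L^1$. Chaining over $t$ via irreducibility of the whole sequence $(\mu_{t,i})_t$ propagates the bounds through all periods. Here irreducibility for each $i$ substitutes for the joint irreducibility that is generally unavailable.

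With these local bounds, the relevant convexified potentials $\psi^n_{t,i}$ are convex and locally bounded on $I_{t,i}$. Komlós' lemma, applied to convex combinations (which preserve the inequality and the slack estimate), gives pointwise limits. Locally uniformly bounded convex functions have locally bounded subgradients, so $h^n$ also has pointwise limits where it is determined; elsewhere we define the limits by liminf. We then set $\phi_{t,i}=\liminf\phi^n_{t,i}$, taking values in $\R\cup\{-\infty\}$ outside the domains. The limits $h_{t,i}$ come from the convex derivatives and may be unbounded, which is why the optimizer may lie outside $\Psi$. Lower semicontinuity of $c$ and the liminf preserve \eqref{ptwiseineq} everywhere. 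Since $\zeta^n\to0$ $\pi$-a.s., we obtain $\zeta=0$ $\pi$-a.s., which is \eqref{ptwiseeq} for this $\pi$. For an arbitrary optimizer $\pi'$, we run the argument with $\pi'$ replaced by $(\pi+\pi')/2$, which is also optimal. Since $\E_\pi[\zeta]$ is well defined and vanishes on each optimizer through the identity \eqref{hedgeprice} extended by Fatou, we conclude $\zeta=0$ $\pi'$-a.s.
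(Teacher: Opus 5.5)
Your outline has two genuine gaps, both where the multi-asset, multi-period structure matters.

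The first gap is the uniform local bound, which you flag as the main obstacle. It is not established, and the proposed mechanism does not work. Freezing all coordinates except $(x_{t,i},x_{t+1,i})$ does not yield a two-period problem of the kind treated in \cite{bnt}. The later terms $h^n_{s,j}(\bar x_s)\,\Delta x_{s,j}$, $s\ge t+1$, depend on $(x_{t,i},x_{t+1,i})$ through the path. The frozen values $\phi^n_{s,j}(x_{s,j})$ depend on $n$ and are uncontrolled, so the argument is circular. Averaging the future over conditional laws of the optimizer $\pi$ removes the later trading terms, but replaces them by $\E_\pi[\phi^n_{s,j}(X_{s,j})\mid \bar X_{t+1}]$. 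These are defined only $\pi$-a.e.\ and controlled only in $L^1(\pi)$, and an optimal $\pi$ may be concentrated on a thin set. The irreducibility argument instead needs an inequality valid for all $y$, with an $n$-uniform right-hand side, tested against two marginals in convex order. The missing idea is to take the supremum over the past rather than freeze it. The function $\chi_{t,n}(x_t):=\sup_{\bar x_{t-1}}\sum_{s<t}\big(\phi^\oplus_{s,n}(x_s)+h_{s,n}(\bar x_s)\cdot\Delta x_s\big)$ is convex on $\R^d$ and satisfies $\chi_{t,n}+\phi^\oplus_{t,n}\le\chi_{t+1,n}$. The product measures $\mu^\otimes_t=\otimes_i\mu_{t,i}$ are in convex order and satisfy $\mu^\otimes_t(\phi^\oplus_{t,n})=\sum_i\mu_{t,i}(\phi_{t,i,n})$. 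Testing the chain against them gives $\int\chi_{t,n}\,d(\mu^\otimes_t-\mu^\otimes_{t-1})\le-\mu(\phi_n)\le C$ once $c\le0$. After an affine normalization, Proposition A.1 of \cite{Lim23} bounds the $\chi_{t,n}$ on compact boxes; it is a multidimensional local-boundedness result for convex functions under coordinatewise irreducibility. This gives local $L^1$ bounds on $\phi^\oplus_{t,n}$ and the one-sided bound $\phi^\oplus_{t,n}\le M_k$. Only then can one split off the individual $\phi_{t,i,n}$, modulo zero-sum constants, and apply Koml\'os. This is how coordinatewise irreducibility replaces joint irreducibility. Note also that two-sided pointwise bounds on $\phi^n_{t,i}$, which you assert, are not available.

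The second gap is the passage to the limit in $h^n$. $h^n_t(\bar x_t)$ is an arbitrary path-dependent function, not a subgradient of any convex potential, and its bounds are not uniform in $n$. So nothing gives pointwise convergence. Nor does $h:=\liminf_n h^n$ preserve \eqref{ptwiseineq}, because $h^n\cdot\Delta x_t$ changes sign with $\Delta x_t$. A termwise liminf would need $h\cdot\Delta x_t\le\liminf_n h^n\cdot\Delta x_t$ for all $x_{t+1}$, which is convergence; and liminfs of different terms are attained along different subsequences. For the same reason you lack $\zeta\le\liminf_n\zeta^n$ $\pi$-a.s. Hence neither the step from $\zeta^n\to0$ to $\zeta=0$ nor your Fatou argument for other optimizers goes through. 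Rerunning with $(\pi+\pi')/2$ would in general give a different pair for each $\pi'$, whereas the theorem needs one $(\phi,h)$ for all optimizers.

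The paper builds $h$ from the limit $\phi$ alone. It sets $H_N:=c$ and $H_t(\bar x_t,\cdot):={\rm conv}[H_{t+1}(\bar x_t,\cdot)-\phi^\oplus_{t+1}]$, shows $\limsup_n H_{t,n}\le H_t$, and selects $h_t(\bar x_t)\in\partial H_t(\bar x_t,\cdot)(x_t)$. This gives \eqref{ptwiseineq} everywhere. Equality on every optimizer then follows from $\limsup_n\int\sum_t(\phi^\oplus_{t,n}+h_{t,n}\cdot\Delta x_t)\,d\pi\le\int\sum_t(\phi^\oplus_t+h_t\cdot\Delta x_t)\,d\pi$ for all $\pi\in{\rm VMT}(\mu)$. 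That inequality is proved by rewriting with the $\chi_{t,n}$ and using the martingale property to replace $h_{t,n}$ by subgradients $\xi_{t,n}(x_t)\in\partial\chi_{t+1,n}(x_t)$. These are locally bounded and the resulting integrands are nonpositive, so Fatou's lemma applies.
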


The pair $(\phi,h)$ in Theorem~\ref{main} is often called a \emph{dual maximizer}, as it solves the dual problem \eqref{dualproblem} associated with the primal minimization problem \eqref{VMOT}. The identity \eqref{ptwiseeq} shows that the semi-static portfolio built from $(\phi,h)$ \emph{replicates} the derivative $c$: for any primal minimizer $\pi$, the portfolio payoff agrees with $c(X)$ for $\pi$-almost every price path. The inequality \eqref{ptwiseineq} ensures that, away from primal minimizers, the same portfolio provides a pathwise subhedge of $c$. By symmetry, Theorem~\ref{main} also yields the existence of a dual \emph{minimizer} for the maximization problem in \eqref{VMOT} when $c$ is upper semicontinuous; in that case, the resulting portfolio superhedges the payoff and replicates $c$ along primal maximizers.

It is crucial to note that a dual optimizer is not guaranteed to lie in $\Psi$. Previous work has shown that the dual problem \eqref{dualproblem} is in general not attained within $\Psi$, even for $(d,N)=(1,2)$ (see \cite{bj,bnt}), unless $c$ satisfies additional regularity assumptions \cite{blo}. While $\Psi$ is a natural domain for formulating the dual problem, it is relatively ``narrow'' due to the lack of compactness in infinite dimensions. Proving dual attainment is therefore substantially harder than proving duality \eqref{duality}, which often follows from standard tools in functional analysis and convex duality. In particular, although the dual optimizer $(\phi,h)$ in Theorem~\ref{main} may lie outside $\Psi$, each component $\phi_{t,i}$ is real-valued $\mu_{t,i}$-a.s., and each $h_{t,i}$ is real-valued as well. Since the marginals $(\mu_{t,i})_{t,i}$ are fixed in the VMOT problem, all functions appearing in the theorem can be regarded as essentially real-valued measurable functions, with no further regularity imposed.

\subsection{Intuitive Explanation and Approximating Dual Maximizers}

Upgrading the duality relation \eqref{duality} to full dual attainment requires a refined approximation argument. The key idea, originating in \cite{bj,bnt} for the single-asset, two-period case, is to construct a sequence of ``nice'' dual candidates and then extract a pointwise limit using local compactness and irreducibility.

We call a sequence $(\phi_n, h_n)_{n \in \mathbb{N}}$ an \emph{approximating dual maximizer} if, for each $n$:
\begin{itemize}
\item $\phi_{t,i,n}$ is real-valued, continuous, and integrable with respect to $\mu_{t,i}$, i.e., $\phi_{t,i,n} \in L^1(\mu_{t,i})$;
\item $h_{t,i,n}$ is bounded and continuous (with the convention $h_{N,i,n} \equiv 0$);
\item the following pathwise inequality holds:
\begin{align}
\label{dual}
\sum_{t=1}^N \sum_{i=1}^d \phi_{t,i,n}(x_{t,i})
+ h_{t,i,n}(\bar x_t)\,\Delta x_{t,i}
\;\le\; c(x)
\quad \text{for all } x \in \R^{Nd},
\end{align}
where $\bar x_t := (x_1,\dots,x_t)\in\R^{dt}$ and $\Delta x_{t,i} := x_{t+1,i} - x_{t,i}$;
\item and the dual values converge increasingly to $D(c)$:
\begin{align}\label{maximizing}
\mu(\phi_n)
:= \sum_{t=1}^N \sum_{i=1}^d \int \phi_{t,i,n}(x_{t,i})\,d\mu_{t,i}(x_{t,i})
\nearrow D(c)
\quad \text{as } n \to \infty.
\end{align}
\end{itemize}

Adapting the strategy of \cite{bj,bnt} to the multi-asset, multi-period setting, the next proposition provides pointwise convergence of a subsequence of $(\phi_n,h_n)$, which is a crucial step toward Theorem~\ref{main}.

\begin{proposition}\label{ptwiseconverge}
Under the assumptions of Theorem~\ref{main}, there exists an approximating dual maximizer $(\phi_n, h_n)_{n \in \mathbb{N}}$ such that, for each $t \in [N]$ and $i \in [d]$, the sequence $\phi_{t,i,n}$ converges $\mu_{t,i}$-a.s.\ to a real-valued function $\phi_{t,i}$ as $n \to \infty$.
\end{proposition}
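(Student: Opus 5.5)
We follow the scheme of \cite{bj,bnt} and proceed in three steps: normalize an arbitrary approximating dual maximizer, derive local $L^1$ bounds on each $\phi_{t,i,n}$ from irreducibility, and then upgrade these bounds to a.s.\ convergence by a Komlós-type argument. The starting point is duality \eqref{duality}, which, together with a standard regularization of the dual candidates (using the continuity of $v_{t,i}$ and the lower semicontinuity of $c$), yields some approximating dual maximizer $(\phi_n,h_n)$. This sequence carries two kinds of gauge freedom that we fix first.

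\textbf{Step 1: normalization.} For any affine function $a(y)=\alpha+\beta y$ and any $t<N$, we may replace $\phi_{t,i,n}$ by $\phi_{t,i,n}+a$ and $\phi_{t+1,i,n}$ by $\phi_{t+1,i,n}-a$. To compensate, we subtract $\beta$ from $h_{t,i,n}$, using $a(x_{t+1,i})-a(x_{t,i})=\beta\Delta x_{t,i}$. This changes neither \eqref{dual} nor $\mu(\phi_n)$, since $\mu_{t,i}$ and $\mu_{t+1,i}$ have the same mass and barycenter. Irreducibility of $(\mu_{t,i})_t$ means that the potential functions $u_{t,i}(y)=\int|y-z|\,d\mu_{t,i}(z)$ satisfy $u_{t,i}<u_{t+1,i}$ on the interior $I_{t,i}$ of the relevant interval, and that the supports are nested. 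We use this freedom, one time step at a time, to pin $\phi_{t,i,n}$ down at two fixed points of $I_{t,i}$, for instance by making its values there equal zero.

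\textbf{Step 2: local $L^1$ bounds.} This is the core step. We fix a compact $K\subset I_{t,i}$ and aim to show $\sup_n\int_K|\phi_{t,i,n}|\,d\mu_{t,i}<\infty$, and likewise at the endpoints carrying atoms.

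The idea is to test \eqref{dual} against suitable martingale-type measures. We freeze all coordinates except the $i$-th asset, and integrate \eqref{dual} against a product of martingale transports in the other coordinates. Because the $h$-terms vanish under martingale integration, this reduces the problem to a one-dimensional multi-period inequality with cost bounded by $\sum v$.

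In that one-dimensional setting we use the estimate of \cite{bj,bnt}. Since $\mu(\phi_n)$ is bounded below by $\mu(\phi_1)$, and the upper bound is controlled by $\sum_{t,i}\mu_{t,i}(v_{t,i})$, the positive parts are controlled. Then, for $y\in K$, the inequality $\phi_{t,i,n}(y)\le C+\,$(a combination of $\phi_{t+1,i,n}$ averaged over a kernel sending $y$ to a measure supported in the strict-convex-order gap) gives pointwise upper bounds. Lower bounds in $L^1(\mu_{t,i}|_K)$ then follow from the bounded total integral, together with the normalization of Step 1.

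The multi-period aspect requires an induction in $t$, either backward from $t=N$ or forward, chaining these estimates across consecutive time steps. The multi-asset aspect requires that the cross terms $\phi_{t,j,n}$ for $j\neq i$ be averaged out, which is possible because their integrals are bounded by the same argument. I expect this step to be the main obstacle: choosing martingale kernels that live inside the irreducible domains of all assets simultaneously, and handling atoms at the boundary, is where the care lies.

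\textbf{Step 3: convergence.} With uniform local $L^1$ bounds on an exhausting sequence of compacts $K_m\uparrow I_{t,i}$, together with the boundary atoms, we apply Komlós' lemma. This gives forward convex combinations of $\phi_{t,i,n}$ that converge $\mu_{t,i}$-a.s.\ on each $K_m$, and a diagonal argument extends the convergence over $m$, $t$ and $i$.

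Convex combinations preserve everything we need:
\begin{itemize}
\item \eqref{dual} holds for the combinations, with $h$ combined in the same way, so continuity and boundedness of $h$ persist;
\item continuity and integrability of $\phi$ persist;
\item the monotone convergence \eqref{maximizing} persists after passing to a subsequence, because $\mu(\phi_n)\to D(c)$.
\end{itemize}
The limit is real-valued $\mu_{t,i}$-a.s.\ by Fatou's lemma applied to the local $L^1$ bounds. Finally, since $\mu_{t,i}$ is concentrated on $I_{t,i}$ together with its boundary atoms, the combined sequence is an approximating dual maximizer with the properties claimed in Proposition~\ref{ptwiseconverge}.
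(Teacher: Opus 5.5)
Your Step~3 (Koml\'os, diagonalization, Ces\`aro means preserving \eqref{dual} and \eqref{maximizing}) matches the paper. The gap is Step~2, which is essentially the entire content of the proposition (it is Lemma~\ref{L1bound*}). You flag it as the main obstacle, and the mechanism you sketch for it does not close, for three reasons.

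First, averaging \eqref{dual} against a martingale kernel $\kappa_y$ gives $\phi_{t,i,n}(y)\le-\int\phi_{t+1,i,n}\,d\kappa_y+\dots$. This is an upper bound only if you already control $\phi_{t+1,i,n}$ from below, so it is circular. Second, $\mu(\phi_n)\ge\mu(\phi_1)$ controls only the total $\sum_{t,i}\mu_{t,i}(\phi_{t,i,n})$, not the individual integrals or their positive parts. Third, \cite{bj,bnt} are two-period results and cannot simply be chained pairwise, because $h_{t,n}$ depends on the whole past path and each intermediate potential is coupled both backward and forward.

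Your Step~1 normalization is also the wrong one, because point values of $\phi_{t,i,n}$ carry no integral information. Start from a sequence obeying your pinning. Add constants $M_n\to\infty$ to $\phi_{N-1,i,n}$ and subtract them from $\phi_{N,i,n}$, which your scheme leaves unpinned; this changes neither side of \eqref{dual} nor $\mu(\phi_n)$. Then lower $\phi_{N-1,i,n}$ back to its prescribed values on shrinking neighbourhoods of the two pinning points, taken to be non-atoms. Lowering a potential preserves \eqref{dual} and costs arbitrarily little dual value, and the pinning holds again. Yet $\int_K|\phi_{N-1,i,n}|\,d\mu_{N-1,i}\to\infty$. (Minor: in your gauge transformation $h_{t,i,n}$ must be increased, not decreased, by $\beta$.)

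The missing idea is the object through which irreducibility acts: the convex moderators $\chi_{t,n}(x_t):=\sup_{\bar x_{t-1}}\sum_{s<t}\big(\phi^\oplus_{s,n}(x_s)+h_{s,n}(\bar x_s)\cdot\Delta x_s\big)$.
\begin{itemize}
\item Taking $x_t=x_{t+1}$ gives $\chi_{t,n}+\phi^\oplus_{t,n}\le\chi_{t+1,n}$.
\item Integrating this against the convex-ordered marginals and telescoping yields both $\int\chi_{t,n}\,d(\mu^\otimes_t-\mu^\otimes_{t-1})\le C$ and $\sum_t\|\chi_{t+1,n}-\chi_{t,n}-\phi^\oplus_{t,n}\|_{L^1(\mu^\otimes_t)}\le C$.
\item The affine gauge has to be fixed on $\chi_{t,n}$, not on $\phi$: subtract a supporting affine function at a point $a$ so that $\chi_{t,n}\ge0=\chi_{t,n}(a)$.
\item Irreducibility, i.e.\ $u_{\mu_{t-1,i}}<u_{\mu_{t,i}}$ on $I_{t-1,i}$, then turns the first bound into locally uniform bounds on $\chi_{t,n}$ \cite[Proposition A.1]{Lim23}.
\item The second bound, on a nonnegative gap, converts these into local $L^1$ bounds on $\phi^\oplus_{t,n}$.
\end{itemize}

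The cross-asset constant gauge, which you mention but never fix, means that only $\phi_{t,i,n}-\int\phi_{t,i,n}\,d\mu_{t,i,k}$ can be bounded. Your claimed bound on $\int_K|\phi_{t,i,n}|\,d\mu_{t,i}$ therefore needs an extra normalization; the paper removes these constants after Koml\'os by subtracting Ces\`aro means that sum to zero over $i$.

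Your reduction to a single asset, by integrating out the other coordinates against independent one-dimensional martingale couplings, is in fact legitimate. The terms $h_{t,j,n}\Delta x_{t,j}$, $j\ne i$, integrate to zero, and $h_{t,i,n}$ averages to a bounded adapted function. This would let you run the moderator argument with $d=1$ instead of the $d$-dimensional estimate. But that one-dimensional multi-period estimate is exactly what the proposal leaves out.
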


The proof of Proposition~\ref{ptwiseconverge} relies on a structural property of the marginals, namely irreducibility in convex order, together with a local $L^1$ bound. We briefly recall the relevant notion (see \cite{bnt} for details). A pair of probability measures $\mu \preceq_c \nu$ (with finite first moments) on $\R$ is called \emph{irreducible} if the open set
\[
I := \{x \in \R : u_{\mu}(x) < u_{\nu}(x)\}
\]
is a connected interval and $\mu(I) = \mu(\R)$, where
\[
u_{\mu}(x) := \int_\R |x-y|\,d\mu(y)
\]
is the potential function of $\mu$. In this case, we define the \emph{domain} $(I,J)$ of $(\mu,\nu)$ by letting $J$ be the smallest interval with $\nu(J) = \nu(\R)$. Then $J$ is the union of $I$ and any endpoints of $I$ that are atoms of $\nu$, and one has $I = \mathrm{int}(J)$. The interval $J$ may be bounded or unbounded, and may take the form $(a,b]$, $[a,b)$, $(a,b)$, or $[a,b]$, with $I=(a,b)$ in all cases. Intuitively, irreducibility expresses that $\nu$ is ``spread out'' from $\mu$ in a regular way. It is a natural and generic property: most pairs of distributions in convex order satisfy irreducibility, and pairs that do not can typically be perturbed slightly to obtain it.

Let $(I_{t,i},J_{t,i})$ denote the domain of the irreducible pair $\mu_{t,i} \preceq_c \mu_{t+1,i}$ for $t\in[N-1]$, with the conventions $J_{0,i}:=J_{1,i}$ and $I_{N,i}:=J_{N-1,i}$. We have $J_{t,i}\subset I_{t+1,i}$ for all $t \in [N-1]$ and $i\in[d]$. The next lemma provides a local $L^1$ bound on the approximating potentials.

\begin{lemma}\label{L1bound*}
Under the assumptions of Theorem~\ref{main}, there exists an approximating dual maximizer $(\phi_n, h_n)_{n \in \mathbb{N}}$. Moreover, for each $t \in [N-1]$ and $i \in [d]$, there exists an increasing sequence of compact intervals $(J_{t,i,k})_{k \in \mathbb{N}}$ such that $\bigcup_{k \in \mathbb{N}} J_{t,i,k} = J_{t,i}$ and
\begin{align}\label{supbound*}
\sup_{n} 
\left\|
\phi_{t,i,n} - \int \phi_{t,i,n}\,d\mu_{t,i,k}
\right\|_{L^1(\mu_{t,i,k})}
\;\le\; C_k,
\end{align}
where $\mu_{t,i,k} := \frac{1}{\mu_{t,i}(J_{t-1,i,k})}\,\mu_{t,i}\big|_{J_{t-1,i,k}}$ is the normalized restriction of $\mu_{t,i}$ to $J_{t-1,i,k}$, and $C_k$ depends only on $k$ (not on $n$).
\end{lemma}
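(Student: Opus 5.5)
Existence of an approximating dual maximizer comes from the duality \eqref{duality} combined with an approximation step. Since $c$ is lower semicontinuous and bounded by $\sum v_{t,i}$, we write $c$ as an increasing limit of continuous functions $c_m$ with the same growth bound, $c_m \ge -\sum v_{t,i}$. By standard duality, or the cited results \cite{eglo,Z} applied to continuous costs, there are dual pairs with continuous $\phi$ and bounded continuous $h$ whose values approach $D(c)$. Mollifying or truncating $h$ gives boundedness and continuity, and the loss in the inequality is absorbed by adjusting $\phi$ by $\pm\epsilon$ times integrable functions such as $1+|x|$. Taking running maxima of the values and passing to a subsequence gives $\mu(\phi_n)\nearrow D(c)$. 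We also normalize: adding constants $a_{t,i,n}$ to $\phi_{t,i,n}$ with $\sum a=0$ does not change \eqref{dual} or $\mu(\phi_n)$. The bound \eqref{supbound*} is invariant under such constants anyway.

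For the $L^1$ bound, we choose $J_{t,i,k}$ to be compact intervals exhausting $J_{t,i}$, and we fix $t,i$. The idea, following \cite{bnt}, is to test \eqref{dual} against specially chosen martingale transports that stay inside the irreducible domain. Because $(\mu_{t,i},\mu_{t+1,i})$ is irreducible with domain $(I,J)$, for each $k$ one can find, for each $y\in J_{t-1,i,k}\subset I_{t,i}$, a small martingale kernel from $y$ to two points of $I_{t,i}$ straddling it. The key estimate to aim for is an inequality of the form
\[
\int \phi_{t,i,n}\,d\mu_{t,i}|_{K} \le \int \phi_{t,i,n}\, d\theta + \text{const}_k ,
\]
together with its reverse for other test measures, where $\theta\preceq_c$ pieces of $\mu_{t,i}$. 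To get this, we integrate \eqref{dual} against a measure $\pi$ obtained by modifying a fixed $\pi_0\in{\rm VMT}(\mu)$: we replace the $(t,i)$ coordinate's conditional law on a subset by a measure in convex order with a controlled density. The dynamic terms $h$ vanish in expectation by the martingale property, all other $\phi_{s,j,n}$ integrate against their exact marginals, and $\mu(\phi_n)\ge \mu(\phi_1)$. This yields
\[
\int \phi_{t,i,n}\, d(\mu_{t,i}-\tilde\mu_{t,i}) \ge -C,
\]
with $C$ controlled by $\int\sum v\,d\pi$. Here $\tilde\mu_{t,i}$ is the perturbed marginal, which must still allow a martingale coupling with $\mu_{t-1,i}$ and $\mu_{t+1,i}$; irreducibility provides the room (strict inequality of potentials on compacts of $I$) to make such perturbations. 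Choosing $\mu_{t,i}-\tilde\mu_{t,i}=\epsilon(\delta_{x}+\delta_{x'}-2\delta_{(x+x')/2})$-type smoothed perturbations supported in $J_{t-1,i,k}$ then gives one-sided control: $\phi_{t,i,n}$ is uniformly "convex-like up to constant" on that compact. Equivalently, following \cite{bnt}, $\phi_{t,i,n}$ plus an affine correction (absorbed into $h_{t-1}$) is bounded below by a concave-type envelope and bounded above, uniformly in $n$ after subtracting its mean over $\mu_{t,i,k}$.

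The main obstacle is producing these admissible perturbations simultaneously compatible with both neighbouring constraints ($\mu_{t-1,i}\preceq\tilde\mu_{t,i}\preceq\mu_{t+1,i}$) and with the other assets. For the other assets I would keep the coupling of the remaining coordinates fixed via conditional independence (gluing), so that only coordinate $(t,i)$ changes. The strict potential gap $u_{\mu_{t-1,i}}<u_{\mu_{t,i}}<u_{\mu_{t+1,i}}$ on $J_{t-1,i,k}$ lies compactly inside $I_{t-1,i}\cap I_{t,i}$ (using $J_{t-1,i}\subset I_{t,i}$), and this gives a uniform $\epsilon_k>0$ of admissible perturbation size. Converting the one-sided convexity bounds plus the mean normalization into an $L^1(\mu_{t,i,k})$ bound is then the standard argument from \cite{bnt}: a function that is bounded below by affine minus constant and has controlled mean is controlled in $L^1$. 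This gives \eqref{supbound*} with $C_k$ independent of $n$.
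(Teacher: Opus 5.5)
Your route differs from the paper's. Instead of using the convex value functions $\chi_{t,n}$ of \eqref{chidefinition}, you integrate \eqref{dual} against martingale laws in which only the $(t,i)$ marginal is replaced by some $\tilde\mu_{t,i}$ with $\mu_{t-1,i}\preceq_c\tilde\mu_{t,i}\preceq_c\mu_{t+1,i}$. This route has a genuine gap.

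Every such $\tilde\mu_{t,i}$ has the same barycenter $m_i$ as $\mu_{t,i}$. So every inequality you extract, $\int\phi_{t,i,n}\,d(\mu_{t,i}-\tilde\mu_{t,i})\ge -C$, is unchanged under $\phi_{t,i,n}\mapsto\phi_{t,i,n}+\gamma_n(x-m_i)$ for arbitrary $\gamma_n$. Yet \eqref{supbound*} fails for the modified functions once $|\gamma_n|\to\infty$, unless $\mu_{t,i,k}$ is a Dirac mass. Your tests therefore control $\phi_{t,i,n}$ only modulo affine functions, and the slopes must be fixed by an explicit choice of the sequence. You only normalize constants.

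Nor is the slope a free gauge. Adding a constant $\beta$ to $h_{s,i}$, compensated by $\beta x_{s,i}$ in $\phi_{s,i}$ and $-\beta x_{s+1,i}$ in $\phi_{s+1,i}$, only realizes slope changes $(\alpha_{t,i})_{t\in[N]}$ with $\sum_t\alpha_{t,i}=0$. So for each asset one slope combination is dictated by the pathwise inequality and is invisible to every single-marginal test. Your ``affine correction absorbed into $h_{t-1}$'' would leave the slope of $\phi_{1,i,n}$ uncontrolled. A forward scheme through $h_1,\dots,h_{N-1}$ could handle $t\le N-1$ once you have bounds modulo affine functions. It still leaves the slope of $\phi_{N,i,n}$, which the paper also needs because Proposition~\ref{ptwiseconverge} concerns all $t\in[N]$.

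This is exactly what the paper's normalization achieves. After subtracting affine $L_{t,n}$ so that $\chi_{t,n}\ge0$ and $\chi_{t,n}(a)=0$, the $\chi_{t,n}$ are locally bounded via $\int\chi_{t,n}\,d(\mu_t^\otimes-\mu_{t-1}^\otimes)\le C$ and \cite[Proposition A.1]{Lim23}. Then $\|\chi_{t+1,n}-\chi_{t,n}-\phi^\oplus_{t,n}\|_{L^1(\mu_t^\otimes)}\le C$ pins down every $\phi^\oplus_{t,n}$. The case $t=N$ rests on $\phi^\oplus_{N,n}\le-\chi_{N,n}$, i.e.\ on $c\le0$ used pathwise rather than in expectation.

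The remaining steps also need repair, though less fundamentally.

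Your conversion to $L^1$ assumes a $\mu_{t,i}$-a.e.\ lower bound $\phi_{t,i,n}\ge\ell_n-C$ with $\ell_n$ affine of bounded slope. Averaged perturbation inequalities do not in general yield such pointwise bounds for the nonconvex $\phi_{t,i,n}$. In the paper, pointwise control comes from the convexity of $\chi_{t,n}$ and the bound $\phi^\oplus_{t,n}\le\chi_{t+1,n}-\chi_{t,n}\le M_k$ on $J_{t,k}$. Within your framework, ``swap'' tests would do the job up to affine functions. Take $\tilde\mu_{t,i}=(1-\varepsilon g)\mu_{t,i}$ with $|g|\le1$ supported in $J_{t-1,i,k}$, $\int g\,d\mu_{t,i}=\int xg\,d\mu_{t,i}=0$, and $\varepsilon=\varepsilon_k$ small. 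By $L^1$--$L^\infty$ duality these give a bound on $\phi_{t,i,n}$ in $L^1(\mu_{t,i,k})$, but again only modulo affine functions.

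Your asserted strict gap $u_{\mu_{t-1,i}}<u_{\mu_{t,i}}$ on $J_{t-1,i,k}$ is false when an endpoint of $I_{t-1,i}$ is an atom of $\mu_{t,i}$, because such an endpoint lies in $J_{t-1,i,k}$ for large $k$. Admissibility there requires comparing the linear growth of the potential gap with that of the perturbation.

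Finally, mollifying a merely bounded measurable $h$ does not make $(h-h_\varepsilon)\cdot\Delta x$ uniformly small, so the error cannot be absorbed into $\varepsilon(1+|x|)$. Existence of an approximating dual maximizer should instead be taken from the duality theorem with continuous test functions, as the paper implicitly does.
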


The local $L^1$ bound in Lemma~\ref{L1bound*} is a key ingredient for extracting pointwise convergent subsequences and thus for proving Proposition~\ref{ptwiseconverge} and Theorem~\ref{main}. Beyond its technical role, it is also of independent interest in the analysis of martingale optimal transport with general cost functions. Detailed proofs of Proposition~\ref{ptwiseconverge} and Lemma~\ref{L1bound*} are given in Section~\ref{sec:proofs}.

These duality and dual attainment results establish a rigorous foundation for robust pricing and hedging in multi-asset, multi-period markets. In the following sections, we show how they translate into practical tools for model calibration, risk management, and the pricing of complex path-dependent derivatives.

\section{Numerical Methods and Implementation}
\label{sec:numerical-methods}

As discussed in \Cref{sec:financial-applications}, the martingale optimal transport (MOT) framework has emerged as a powerful tool in quantitative finance, enabling both model calibration to observed market marginals and robust pricing and hedging of complex, multi-asset derivatives. 
In particular, MOT offers a non-parametric, model-independent methodology that naturally accommodates path-dependence and requires only minimal assumptions on asset dynamics. Moreover, it allows practitioners to incorporate additional market information—such as intermediate marginals or correlation-sensitive products—to further tighten price bounds.
These theoretical advances, particularly in the multi-marginal and multi-asset setting, underscore the importance of scalable and accurate numerical methods. Motivated by these applications, we now turn to the numerical implementation that make the solution of high-dimensional MOT problems feasible in practice.

In this section, we detail the computational strategies employed to solve the multi-marginal MOT problem, emphasizing the challenges posed by high-dimensionality and path-dependent constraints. We introduce the primal-dual linear programming (PDLP) approach, describe its algorithmic enhancements, and present the discrete formulation of the MOT problem suitable for large-scale numerical resolution.

\subsection{Computational Challenges}

The numerical resolution of the martingale optimal transport (MOT) problem has garnered considerable attention in recent literature. 
In the multi-marginal case, the direct solution of the underlying linear programming (LP) formulation of MOT suffers from the curse of dimensionality, as the computational complexity grows exponentially with the number of time periods and the number of dimensions of the process. 

Conventional methods for solving MOT include entropy-regularized approaches, most notably the Sinkhorn algorithm, and neural network (NN) based methods. The Sinkhorn algorithm has been successfully applied to the primal formulation of optimal transport problems by exploiting Bregman projections to solve the regularized linear program~\cite{cuturi2013sinkhorn}. 
De March~\cite{demarch2018entropic} proposes a modified Sinkhorn algorithm for bimarginal MOT, which replaces the standard Bregman projections with an approximate projection procedure that iteratively enforces the martingale constraints through a fixed-point update mechanism, thereby achieving adequate accuracy while mitigating the numerical instabilities inherent in the exact projection computations.
However, significant challenges arise when extending this approach to the multi-marginal MOT framework. 
Moreover, the Bregman projections essential to the Sinkhorn algorithm are problematic to compute under the martingale constraints; indeed, even in the bimarginal case, these projections are solved only approximately, which compromises accuracy. Additionally, the iterative nature of Sinkhorn renders it susceptible to significant numerical instability when handling the intricate constraints imposed by the martingale requirement.

Neural network methods, which have been developed to tackle the dual formulation of the MOT problem~\cite{eckstein2021robust}, offer a flexible approach to handle high-dimensional data. However, these methods currently do not offer theoretical guarantees regarding convergence or precision, which limits their reliability in scenarios where rigorous accuracy is paramount. 

Given these challenges, there is a need for scalable, robust algorithms that can efficiently handle the high-dimensional, sparse and path-dependent structure of multi-marginal MOT problems.

\subsection{PDLP Algorithm and Enhancements}
To address these computational challenges, we employ the primal-dual linear programming (PDLP) framework~\cite{applegate2021practical,hinder2023worst}, which is well-suited for large-scale linear programs. 
PDLP is a first-order method that combines a saddle-point reformulation of the LP with adaptive algorithmic enhancements. 
It reformulates the standard LP as a saddle-point problem and iteratively updates primal and dual variables using matrix-vector products, avoiding explicit matrix storage and factorization.
Its inherent matrix-free nature and reliance solely on matrix-vector products make PDLP particularly well-suited to the high-dimensional and sparse settings often encountered in MOT problems.

In brief, PDLP begins by rewriting the standard LP formulation
\begin{equation}
  \begin{aligned}
    \min_{x \in \mathbb{R}^n} \; & c^\top x, \\
    \text{s.t.}\; & Ax = b,\quad x \ge 0,
  \end{aligned}
  \label{eq:lp_formulation}
\end{equation}
into an equivalent saddle-point (or primal-dual) problem:
\begin{equation}
  \min_{x \in X}\max_{y \in Y} \; L(x,y) \quad \text{with} \quad L(x,y) = c^\top x - y^\top Ax + b^\top y,
  \label{eq:saddle_point_lp}
\end{equation}
where $X=\mathbb{R}_{\ge 0}^n$ and $Y=\mathbb{R}^m$. 
The \emph{primal-dual hybrid gradient} (PDHG) scheme~\cite{chambolle2011first} is then applied by iteratively performing the updates
\begin{align}
  x_{k+1} &= \operatorname{proj}_{X} \left(x_k - \tau \left(c - A^\top y_k\right)\right), \label{eq:pdhg_primal} \\
  y_{k+1} &= y_k + \sigma \left(b - A\left(2x_{k+1} - x_k\right)\right), \label{eq:pdhg_dual}
\end{align}
where $\tau$ and $\sigma$ denote the primal and dual step sizes, respectively.

PDLP enhances the baseline PDHG method by incorporating several key features:
\begin{itemize}
  \item \textbf{Adaptive Step Size.} An adaptive heuristic adjusts the step size dynamically to guarantee that the step condition
    \[
      \tau, \sigma \leq \frac{\|z_{k+1}-z_k\|_\omega^2}{2\,(y_{k+1}-y_k)^\top A(x_{k+1}-x_k)}
    \]
    is satisfied. This avoids overly conservative fixed-step estimates (e.g., $\tau = \sigma = 1/\|A\|^2$) and accelerates convergence.
  \item \textbf{Adaptive Restarts.} Periodic restarts based on the normalized duality gap (which remains finite even when the standard duality gap is unbounded) ensure that progress is regularly “reset” to counteract self-inhibiting tailing-off, thereby leading to empirical linear convergence.
  \item \textbf{Primal Weight Updates.} During restarts, a weight parameter is updated to balance the progress in the primal and dual variables. In effect, the update aims to equalize the weighted distances to optimality in both spaces.
  \item \textbf{Presolving and Diagonal Preconditioning.} Prior to the main iterations, problem data are simplified using presolve routines and then rescaled via a diagonal preconditioner. Such scaling helps to alleviate numerical imbalances, thereby reducing the effective condition number of the data matrix.
\end{itemize}

From a computational complexity perspective, recent analyses have established rigorous bounds for the PDLP algorithm in both special and general cases. While PDLP achieves particularly strong complexity guarantees for totally unimodular constraint matrices~\cite{hinder2023worst}, it is important to note that total unimodularity of the constraint matrix $A$ is a property specific to the classical bimarginal optimal transport (OT) problem. In contrast, the constraint matrices encountered in multimarginal OT and in bi- and multi-marginal martingale optimal transport (MOT) problems generally lack total unimodularity due to the more complex structure of the marginal and martingale constraints.

For general linear programs, PDLP requires at most
$O\left(\frac{L^2 R^2}{\epsilon^2}\right)$
matrix-vector multiplications to reach an $\epsilon$-optimal solution~\cite{applegate2021practical}, where $L$ is a Lipschitz constant related to the problem data, $R$ is a bound on the feasible region, and $\epsilon$ is the desired accuracy. Although this worst-case bound is less favorable than in the unimodular case, PDLP remains highly effective in practice for large, sparse, and structured problems such as those encountered in multi-marginal MOT. Its matrix-free implementation and scalability make it a robust choice for high-dimensional optimal transport computations.

This general complexity result underscores the practical viability of PDLP for MOT applications, even when the underlying linear program does not possess special structure. The algorithm's ability to exploit sparsity and avoid explicit matrix factorizations is particularly advantageous in the high-dimensional, path-dependent setting of multi-marginal MOT.

\subsection{Discrete Multimarginal MOT Formulation}
For numerical implementation, we discretize the multi-marginal MOT problem over a finite grid of asset prices and time steps.
Specifically, we consider a MOT problem that spans multiple assets and time steps, defined over a discrete support. 
The discrete formulation is given by the following linear program:
\begin{problem}[discrete multimarginal MOT]
\label{prob:multi-mot}
\begin{equation}\tag{multiMOT}\label{eq:multi-mot}
    \begin{aligned}
        \min_{\pi \geq 0} \quad & \langle C , \pi \rangle \\
        \text{subject to} \quad & \proj_{(t,k)} \left( \pi \right) = \mu_{t,k}, \quad \forall t \in [T], \ k \in [d], \\
        & \mathbb{E}_{\pi} \left[s_{t+1,k} - s_{t,k} \mid \mathcal{F}_t \right]= 0, \quad  \forall t \in [T-1], \ k \in [d], 
    \end{aligned}
\end{equation}
where $C = [C_{i_{1,1}, \dots i_{T,d}}]_{i_{1,1} \in [n_{1,1}], \dots, i_{T,d} \in [n_{T,d}]}$ denotes the non-negative cost tensor with rank $m = Td$, $\mu_{t,k} = [\mu_{t,k,i}]_{i \in [n_{t,k}]}$ is the probability vector representing the marginal of the $k$-th process at the $t$-th time step.
\end{problem}

To account for discretization error, we relax the martingale constraints by introducing a tolerance parameter corresponding to the grid size:
\begin{problem}[discrete multimarginal MOT with relaxed martingale constraints]
\label{prob:multi-mot-relaxed}
\begin{equation}\tag{multiMOTrelaxed}\label{eq:multi-mot-relaxed}
    \begin{aligned}
        \min_{\pi \geq 0} \quad & \langle C , \pi \rangle \\
        \text{subject to} \quad & \proj_{(t,k)} \left( \pi \right) = \mu_{t,k}, \quad \forall t \in [T], \ k \in [d], \\
        & \mathbb{E}_{\pi} \left[s_{t+1,k} - s_{t,k} \mid \mathcal{F}_t \right] \le \frac{\Delta_{t,k}}{2}, \quad  \forall t \in [T-1], \ k \in [d], \\
        & \mathbb{E}_{\pi} \left[s_{t+1,k} - s_{t,k} \mid \mathcal{F}_t \right] \ge -\frac{\Delta_{t,k}}{2}, \quad  \forall t \in [T-1], \ k \in [d], 
    \end{aligned}
\end{equation}
where $C = [C_{i_{1,1}, \dots i_{T,d}}]_{i_{1,1} \in [n_{1,1}], \dots, i_{T,d} \in [n_{T,d}]}$ denotes the non-negative cost tensor with rank $m = Td$, $\mu_{t,k} = [\mu_{t,k,i}]_{i \in [n_{t,k}]}$ is the probability vector representing the marginal of the $k$-th process at the $t$-th time step.
\end{problem}
This discrete formulation enables the use of efficient large-scale linear programming solvers, such as the PDLP implementation in NVIDIA cuOpt, to compute optimal transport plans and corresponding dual certificates under realistic market constraints. Leveraging this computational framework, we now present the results of our numerical experiments, including solution quality metrics, illustrative case studies, and an analysis of both primal and dual solutions.

\section{Numerical Experiments and Results}
\label{sec:numerical-results}

In this section, we present the results of our numerical experiments for the MOT problem. We begin by describing the metrics used to evaluate solution quality, detail the construction of marginal data from market information, analyze the computed primal and dual solutions, and provide a case study on a worst-of autocallable option. We conclude with a discussion of the implications and limitations of our findings.

\subsection{Metrics for solution quality}

To rigorously assess the quality and reliability of the computed solutions to the MOT problem, we employ a suite of quantitative metrics that capture both optimality and feasibility aspects. These metrics are designed to provide a comprehensive evaluation of the numerical performance of the PDLP approach under the high-dimensional and path-dependent constraints characteristic of the MOT framework.

The primary metric is the \emph{primal objective value}, which represents the expected cost (or payoff) under the optimal transport plan for the given cost tensor. Complementing this, the \emph{dual objective value} aggregates the contributions from the dual variables associated with the marginal constraints, weighted by the prescribed marginals. The difference between these two values, known as the \emph{duality gap}, serves as a certificate of optimality: a small gap indicates that the computed solution is close to optimal, while a larger gap may signal numerical issues or suboptimal convergence.

Feasibility with respect to the problem constraints is evaluated through measures of \emph{primal infeasibility} and \emph{dual infeasibility}. Primal infeasibility quantifies the deviation of the computed transport plan from the prescribed marginal distributions, while dual infeasibility measures the extent to which the dual variables fail to satisfy the dual constraints imposed by the cost structure and the martingale conditions.

To summarize these deviations, we report the $\ell_{p}$-norms of the infeasibility vectors, including $\ell_{1}$, $\ell_{2}$, and $\ell_{\infty}$ norms. These norms provide insight into the aggregate, average, and worst-case violations, respectively, and are useful for benchmarking solver performance across different problem instances and parameter regimes.

For clarity, the key metrics are:
\begin{itemize} 
    \item \textbf{Primal objective value} ($V_{p}$): Expected cost or payoff under the optimal transport plan. 
    \item \textbf{Dual objective value} ($V_{d}$): Aggregate value from dual variables, weighted by the prescribed marginals. 
    \item \textbf{Duality gap} ($G$): Difference between primal and dual objective values, indicating solution optimality. \item \textbf{Primal infeasibility} ($\delta^{p}$): Deviation of the computed transport plan from the prescribed marginal distributions. 
    \item \textbf{Dual infeasibility} ($\delta^{d}$): Extent to which dual variables fail to satisfy dual constraints. 
    \item \textbf{$\ell_{p}$-norms of infeasibility}: Aggregate ($\ell_{1}$), average ($\ell_{2}$), and worst-case ($\ell_{\infty}$) violations of constraints. 
\end{itemize}
Together, these metrics enable a transparent and robust evaluation of both the accuracy and feasibility of the computed solutions, facilitating meaningful comparisons with alternative methods and guiding further algorithmic improvements.

\subsection{Case study: worst-of autocallable option}

To showcase the practical relevance of our approach, we consider a worst-of autocallable option referencing the S\&P 500 and NASDAQ 100 indices. Autocallable options are a prominent class of structured products in finance, widely traded in equity-linked markets due to their attractive risk-return profiles and embedded path-dependent features. These instruments offer periodic opportunities for early redemption (autocall), contingent on the performance of one or more underlying assets relative to specified barriers. Their popularity stems from the ability to generate enhanced yields in low-volatility environments, provide partial downside protection, and facilitate tailored risk exposures for both retail and institutional investors. The liquidity and market depth of autocallable products, especially those linked to major equity indices, make them a natural choice for robust pricing and risk management studies. Moreover, the worst-of variant, where the payoff depends on the least-performing asset in a basket, is particularly relevant for risk transfer and capital protection strategies.

\begin{definition}[Multi-Asset Autocall Payoff]
\label{def:autocall-payoff}
Let $\bm{S}(t) := (S_1(t), S_2(t), \ldots, S_d(t))$ for $t \in [0, T]$ denote the vector of stochastic processes representing the prices of $d$ underlying assets.  
Consider a sequence of observation dates $0 < t_1 < t_2 < \cdots < t_m = T$, where barrier and knock-out conditions are evaluated and coupons are accrued.  
Let $K$ be the strike price, $B_{\mathrm{KI}}$ the knock-in barrier, $B_{\mathrm{KO}}$ the knock-out barrier, and $C_j$ the coupon accrued over $[t_{j-1}, t_j]$ for $j \in [m]$ (with $t_0 := 0$).

Define the worst-performing asset at time $t$ as
\begin{align*}
    X(t) := \min_{i \in [d]} S_i(t).
\end{align*}

Define the knock-out stopping time as
\begin{align*}
    \tau := \inf \left\{ j \in [m] : X(t_j) \geq B_{\mathrm{KO}} \right\},
\end{align*}
with the convention that $\inf \varnothing = +\infty$.

The payoff at time $t_k$ for $k \in [m]$ of a worst-of multi-asset autocall is given by
\begin{align*}
    f_k\left( \bm{S}(t_1), \ldots, \bm{S}(t_k) \right) =
    \begin{cases}
        \sum_{j=1}^{\tau} C_j, & \text{if } k = \tau < m, \\[1.5ex]
        \min \{ X(T) - K, 0 \}, & \text{if } k = m \leq \tau \text{ and } X(T) \leq B_{\mathrm{KI}}, \\[1.5ex]
        \sum_{j=1}^{m} C_j, & \text{if } k = m \leq \tau \text{ and } X(T) > B_{\mathrm{KI}}, \\[1.5ex]
        0, & \text{otherwise, i.e., } k < \min\{m, \tau\}.
    \end{cases}
\end{align*}
\end{definition}

In our specific application, we consider a worst-of autocallable structure linked to two major equity indices: the S\&P 500 (SPX) and the NASDAQ 100 (NDX), corresponding to $d=2$ underlying assets. The product features a three-year maturity, with an inception date $t_{\mathrm{inc}} = -\frac{5}{24}$ years and three annual monitoring dates: $t_1 = t_{\mathrm{inc}} + 1$, $t_2 = t_{\mathrm{inc}} + 2$, and $t_3 = t_{\mathrm{inc}} + 3$ years. The knock-out barrier is set at $B_{\mathrm{KO}} = 120\%$ of the initial level, and the knock-in barrier at $B_{\mathrm{KI}} = 60\%$. The strike price is $K = 100\%$. The coupon is a fixed rate, with an annualized rate of $8\%$, accrued at each period, i.e., $C_j = 8\% \times (t_j - t_{j-1})$ for $j \in \{1,2,3\}$. We perform present value calculations as of $t_0 = 0$, assuming a constant discount rate of $1\%$.\footnote{In practice, the \emph{inception date} refers to the date when the product is issued and the initial asset levels are fixed, while the \emph{evaluation date} is the date at which the product is being valued or analyzed, typically for risk management or pricing purposes. The separation allows for the modeling of products that have already been running for some time, and for the use of current market data in valuation.} 

It is important to note that the payoff function described above is defined \emph{per unit notional}. In practical applications, the final payout to the investor is obtained by multiplying the computed payoff by the notional amount of the contract.

At each monitoring date, the reference level is determined by the worst-performing underlying between SPX and NDX. For example, suppose at $t_1 = t_{\mathrm{inc}} + 1$ years, SPX is at $125\%$ and NDX is at $130\%$ of their respective initial levels. The worst-performing asset is SPX, which exceeds the $120\%$ knock-out barrier, triggering the autocall feature. The product is then redeemed early, and the holder receives the notional plus the accrued coupon up to that date.

If neither asset breaches the knock-out barrier at any monitoring date, the product continues to maturity. At $t_3 = t_1 + 3$ years, if the worst-performing asset is below the knock-in barrier ($60\%$), the holder receives a down-and-in put payoff, $\min\{X(T) - K, 0\}$, reflecting the loss relative to the strike. If the worst-performing asset remains above the knock-in barrier, the holder receives the full coupon accrued over the three years.

This structure provides a realistic and challenging test case for the martingale optimal transport framework, as the payoff is highly path-dependent and sensitive to joint asset dynamics. The dimensionality of the problem is $dm = 2 \times 3 = 6$ for the two assets and three monitoring dates, making it tractable for numerical solution via large-scale linear programming. In the following sections, we present numerical results illustrating dual attainment and optimal transport plans for this worst-of autocallable payoff, computed using PDLP solvers under the prescribed marginal and martingale constraints.

\subsection{Marginal data construction}

To construct the marginal distributions required for the multi-period, multi-asset martingale optimal transport (MOT) problem, we utilize market data in the form of vanilla option prices for the S\&P 500 (SPX) and NASDAQ 100 (NDX) indices, evaluated as of December 5, 2025 ($t_0$). These option prices are used to infer the risk-neutral marginal distributions of the underlying asset prices at the three monitoring dates specified in the autocallable payoff definition (see~\Cref{def:autocall-payoff}).

For each asset and each monitoring date, the marginal distribution is discretized over a support of $n=10$ points, chosen to adequately capture the range of plausible price scenarios implied by the market. The support points are scaled by the respective spot prices at the evaluation date ($t_0$), ensuring comparability and normalization across assets and time steps.

The construction of these marginals from market data follows a standard procedure in quantitative finance. First, a smooth implied volatility surface is calibrated to the observed vanilla option prices for each asset and maturity. Then, using the Breeden--Litzenberger formula~\cite{bl78}, the risk-neutral probability density function is extracted by taking the second derivative of the call price with respect to strike. The resulting densities are discretized to form the input marginals for the MOT problem.

The resulting discrete marginal distributions for SPX and NDX at all three monitoring dates are visualized in~\Cref{fig:marginal-distributions-2d}. These marginals serve as input constraints for the MOT problem, enabling robust pricing and hedging of the worst-of autocallable payoff under the prescribed market information.

\begin{figure}[t]
    \centering
    \includegraphics[width=1.0\linewidth]{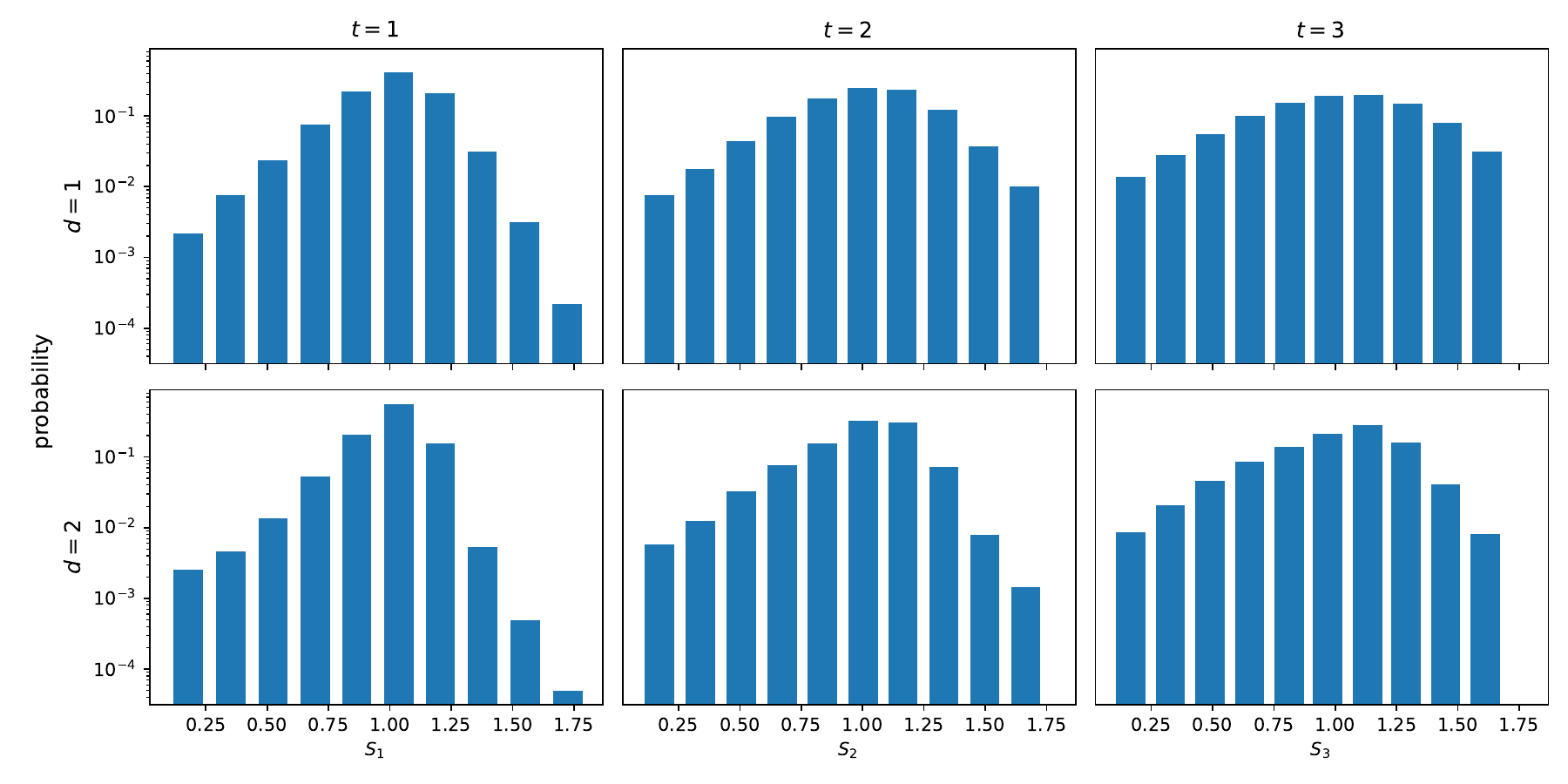}
    \caption{Discrete marginal distributions used as input to the $2$-dimensional, $3$-time-step MOT problem. Asset prices are scaled by their respective spot prices at $t_0$.}
    \label{fig:marginal-distributions-2d}
\end{figure}

\subsection{Discussion of results}

In our numerical experiments, we employ the PDLP solver implemented within NVIDIA cuOpt~\cite{nvidia_cuopt}, running on GPU hardware, to solve the discrete multi-marginal MOT problems. Leveraging GPU acceleration enables efficient large-scale computation and significantly reduces solution times for high-dimensional instances. The solver is configured with an optimality tolerance of $10^{-12}$, ensuring that the primal and dual infeasibilities, as well as the optimality gap, are all maintained below this threshold in both absolute and relative terms. This stringent tolerance guarantees that the computed solutions are not only highly accurate but also robust with respect to the complex constraints inherent in the MOT framework.

\begin{figure}[h!]
    \centering
    \begin{subfigure}{0.48\textwidth}
        \includegraphics[width=\textwidth]{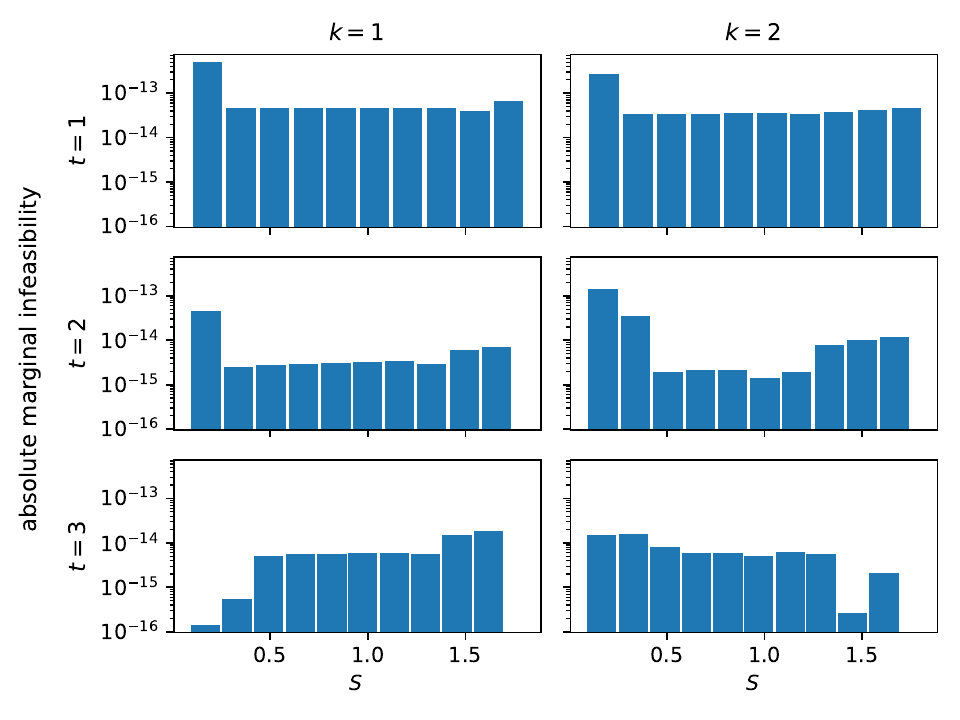}
        \caption{primal maximization}
        \label{fig:2d-marginal-infeas-max}
    \end{subfigure}
    \hfill
    \begin{subfigure}{0.48\textwidth}
        \includegraphics[width=\textwidth]{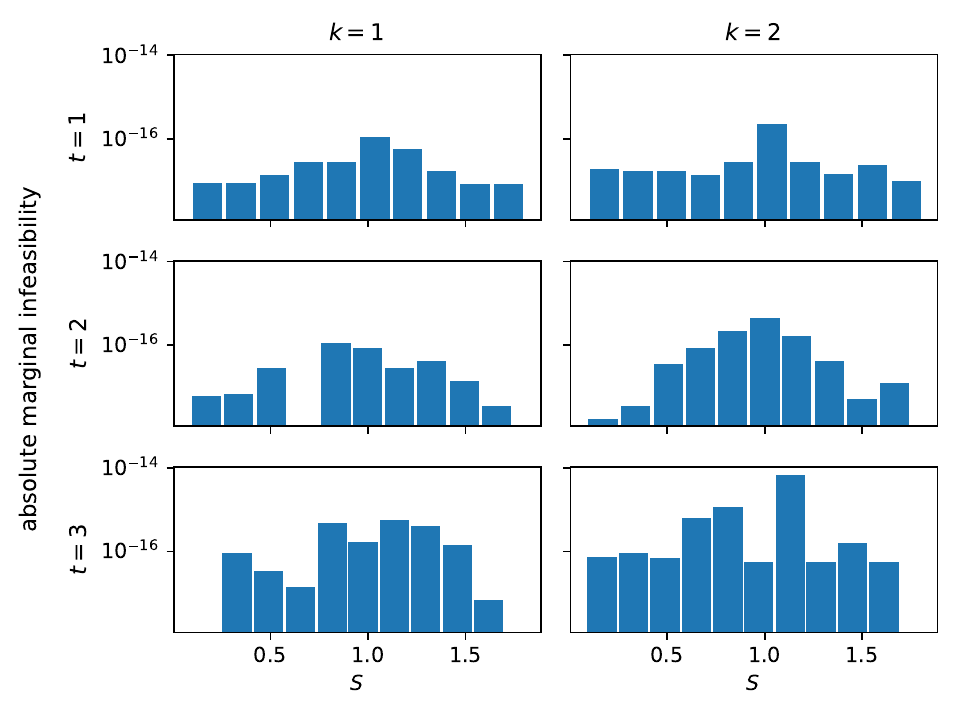}
        \caption{primal minimization}
        \label{fig:2d-marginal-infeas-min}
    \end{subfigure}
    \caption{Absolute infeasibility with respect to the marginal constraints in the primal solution of the $2$-dimensional, $3$-time-step MOT problem. Results are shown for both maximization and minimization directions, computed using the PDLP solver in NVIDIA cuOpt with an optimality tolerance of $10^{-12}$. The low levels of infeasibility confirm the solver's ability to enforce marginal constraints with high precision.}
    \label{fig:2d-marginal-infeas}
\end{figure}
\begin{figure}[h!]
    \centering
    \begin{subfigure}{0.48\textwidth}
        \includegraphics[width=\textwidth]{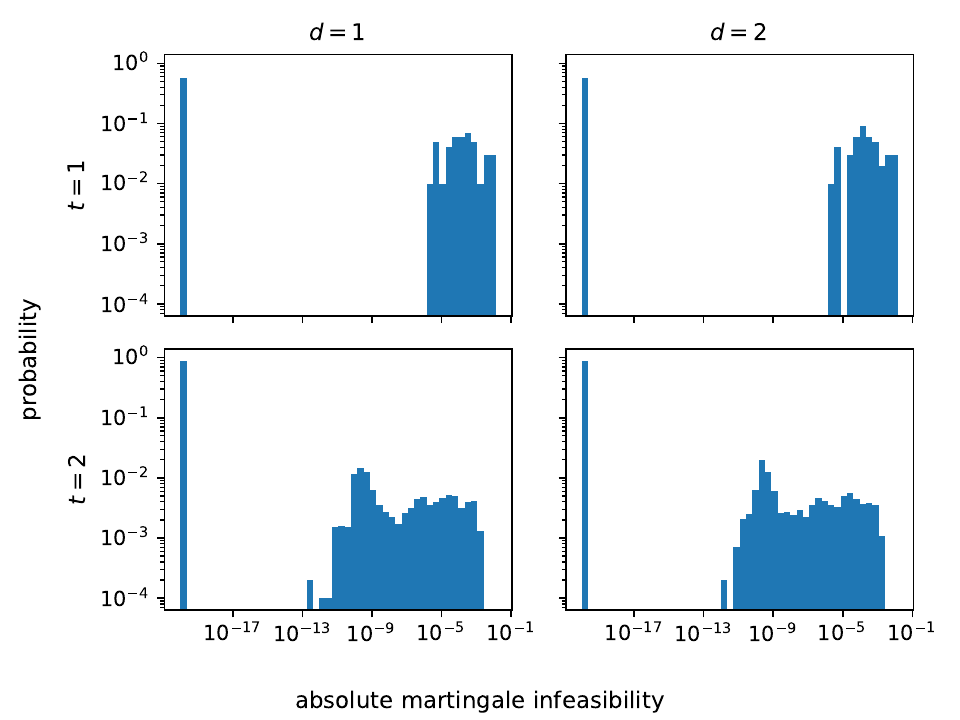}
        \caption{primal maximization}
        \label{fig:2d-martingal-infeas-max}
    \end{subfigure}
    \hfill
    \begin{subfigure}{0.48\textwidth}
        \includegraphics[width=\textwidth]{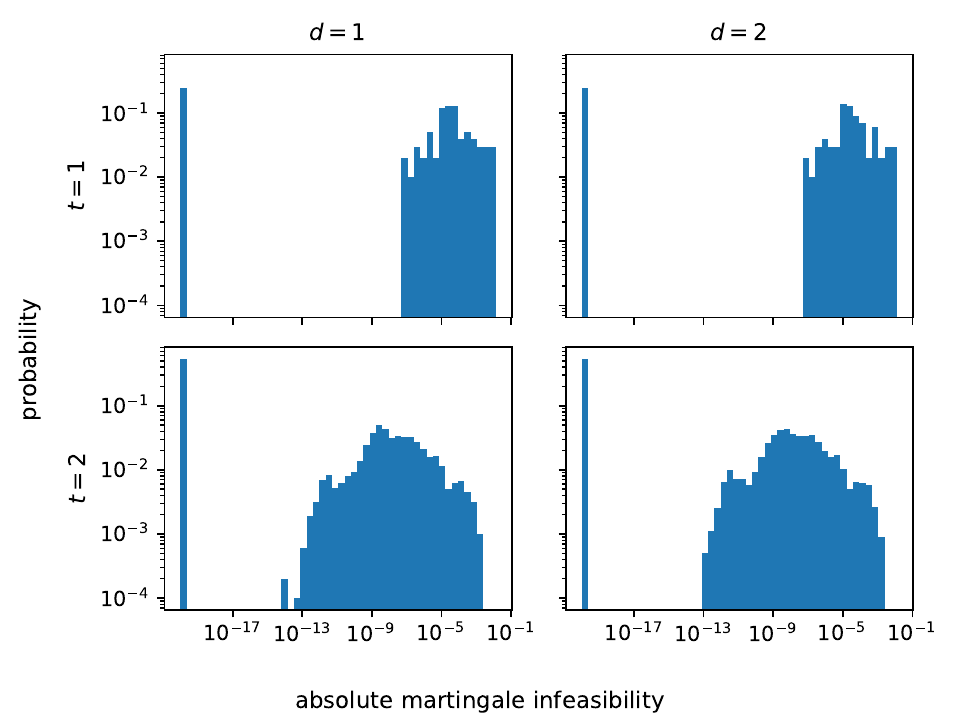}
        \caption{primal minimization}
        \label{fig:2d-martingal-infeas-min}
    \end{subfigure}
    \caption{Absolute infeasibility with respect to the martingale constraints in the primal solution of the $2$-dimensional, $3$-time-step MOT problem. The PDLP solver in cuOpt maintains infeasibility below the grid discretization error for both optimization directions, demonstrating robust enforcement of path-dependent martingale constraints.}
    \label{fig:2d-martingal-infeas}
\end{figure}

The primal solution, represented by the optimal transport plan $\pi$, encodes the joint probability distribution over asset prices and time steps that minimizes (or maximizes) the expected payoff under the specified cost function, subject to the marginal and martingale constraints. \Cref{fig:2d-marginal-infeas,fig:2d-martingal-infeas} illustrate the infeasibility of the primal solution with respect to the marginal and martingale constraints, respectively, for both the maximization and minimization problems. The results demonstrate that the PDLP solver achieves high accuracy, with infeasibility levels well within the prescribed tolerances. In particular, the relaxed martingale constraints are satisfied within the grid discretization error, confirming the effectiveness of the presolve and preconditioning steps in the numerical pipeline.

\begin{figure}[h!]
    \centering
    \begin{subfigure}{0.48\textwidth}
        \includegraphics[width=\textwidth]{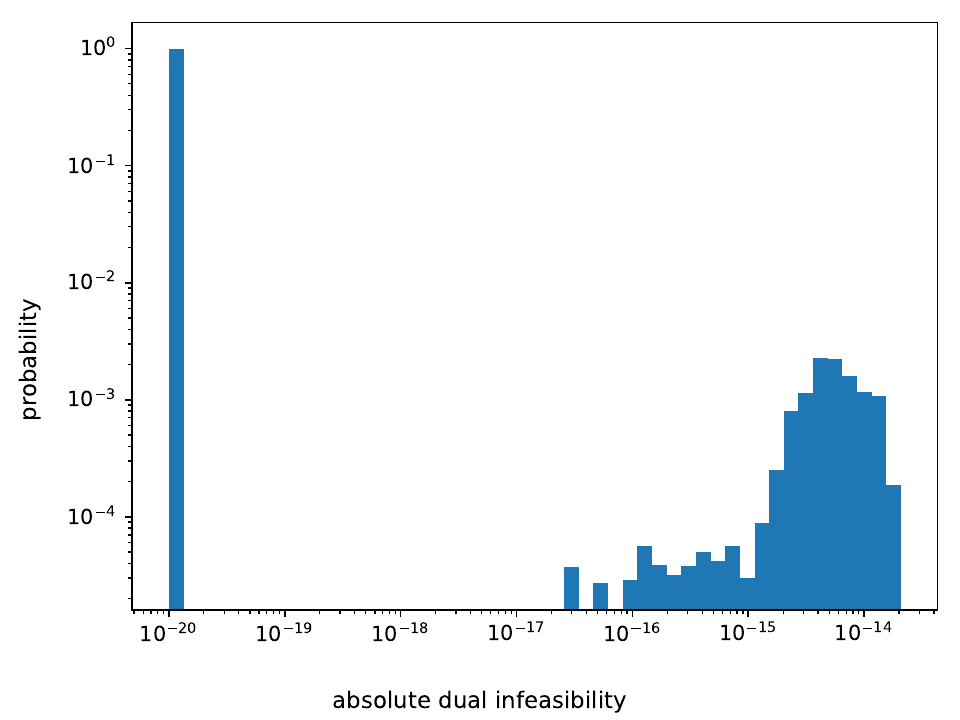}
        \caption{primal maximization}
        \label{fig:2d-dual-infeas-max}
    \end{subfigure}
    \hfill
    \begin{subfigure}{0.48\textwidth}
        \includegraphics[width=\textwidth]{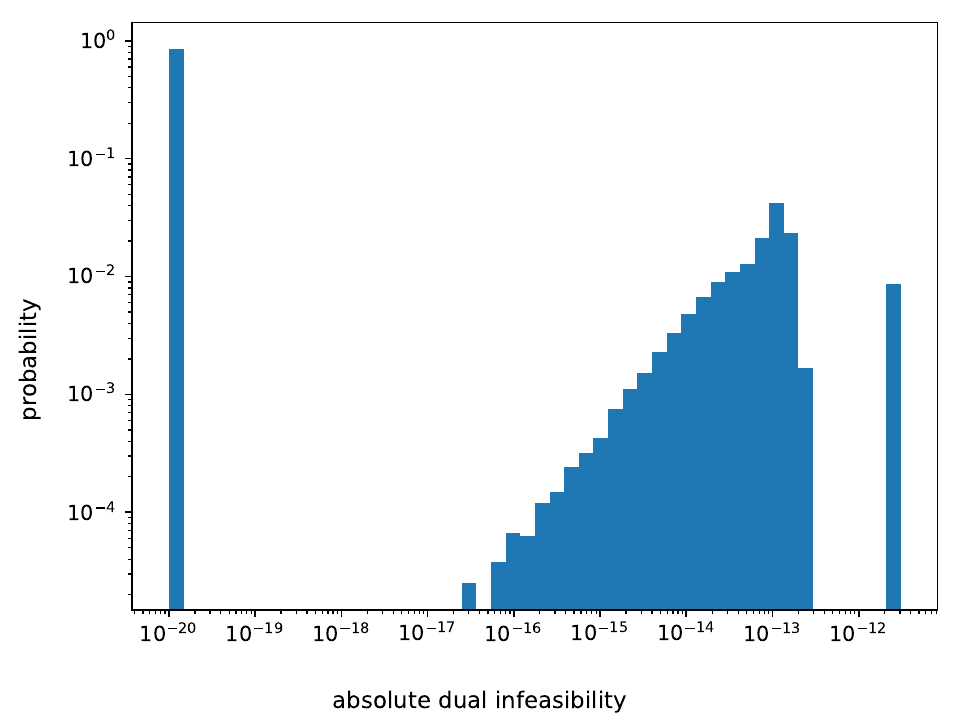}
        \caption{primal minimization}
        \label{fig:2d-dual-infeas-min}
    \end{subfigure}
    \caption{Absolute infeasibility with respect to the dual constraints in the primal solution of the $2$-dimensional, $3$-time-step MOT problem. The results highlight the tightness of the dual bounds achieved by the PDLP solver in cuOpt, with infeasibility consistently below the prescribed tolerance.}
    \label{fig:2d-dual-infeas}
\end{figure}

\begin{table}[h!]
\centering
\begin{tabular}{|ll|l|l|}
\hline
\multicolumn{2}{|c|}{\textbf{Optimization   Direction}}                                                                   & \textbf{Maximization} & \textbf{Minimization} \\ \hline
\multicolumn{2}{|c|}{\textbf{Optimal Primal   Objective Value}}                                                           & $0.187880$            & $0.062208$           \\ \hline
\multicolumn{2}{|c|}{\textbf{Optimal Dual   Objective Value}}                                                             & $0.187880$            & $0.062208$           \\ \hline
\multicolumn{2}{|c|}{\textbf{Duality Gap}}                                                                                & $-7.7577 \times 10^{-14}$ & $-1.1971 \times 10^{-13}$ \\ \hline
\multicolumn{1}{|c|}{\multirow{3}{*}{\textbf{Primal Infeasibility}}}                                         & \textbf{$\ell_{1}$}      & $4.1653 \times 10^{-12}$  & $1.3085 \times 10^{-14}$  \\ \cline{2-4} 
\multicolumn{1}{|c|}{}                                                                                                  & \textbf{$\ell_{2}$}      & $1.6503 \times 10^{-12}$  & $7.0355 \times 10^{-15}$  \\ \cline{2-4} 
\multicolumn{1}{|c|}{}                                                                                                  & \textbf{$\ell_{\infty}$} & $1.4217 \times 10^{-12}$  & $6.8279 \times 10^{-15}$  \\ \hline
\multicolumn{1}{|c|}{\multirow{3}{*}{\textbf{Dual Infeasibility}}}                                           & \textbf{$\ell_{1}$}      & $7.0409 \times 10^{-11}$  & $3.7987 \times 10^{-8}$  \\ \cline{2-4} 
\multicolumn{1}{|c|}{}                                                                                                  & \textbf{$\ell_{2}$}      & $7.7958 \times 10^{-13}$  & $2.8302 \times 10^{-10}$  \\ \cline{2-4} 
\multicolumn{1}{|c|}{}                                                                                                  & \textbf{$\ell_{\infty}$} & $2.0761 \times 10^{-14}$  & $3.0800 \times 10^{-12}$  \\ \hline
\end{tabular}
\caption{Summary of numerical results for the $2$-dimensional, $3$-time-step MOT problem solved using the PDLP solver in NVIDIA cuOpt. The table reports optimal primal and dual objective values, duality gaps, and infeasibility norms for both maximization and minimization. All metrics are on the order of or below $10^{-8}$, with the majority well below $10^{-10}$, confirming the accuracy and robustness of the computed solutions.}
\label{tab:mot-2d-numerical-results}
\end{table}

The dual solution, comprising the variables $\phi$ and $h$, provides valuable insight into the shadow prices associated with the marginal and martingale constraints. These dual variables can be interpreted as sensitivities of the optimal value to perturbations in the input marginals and the martingale conditions, respectively. \Cref{fig:2d-dual-infeas} presents the infeasibility of the dual solution, further corroborating the tightness of the computed bounds and the robustness of the solver.

\Cref{tab:mot-2d-numerical-results} summarizes the key numerical results, including the optimal primal and dual objective values, duality gaps, and infeasibility norms for both maximization and minimization directions. The duality gaps are on the order of $10^{-13}$ or smaller, demonstrating that the solutions achieve optimality to within nearly machine precision. Primal and dual infeasibility norms are similarly small, reflecting the solver's ability to enforce the complex constraints of the MOT problem in a high-dimensional, path-dependent setting.

Overall, the computed primal and dual solutions validate the theoretical guarantees of the PDLP approach and demonstrate its practical effectiveness for large-scale, multi-asset, multi-period MOT problems. The use of GPU-accelerated computation via NVIDIA cuOpt was essential for handling the high-dimensional, path-dependent structure of these problems, enabling robust and scalable solution of MOT instances that would be infeasible on standard CPU-based solvers. These results provide a foundation for further applications in robust pricing, risk management, and model calibration in financial engineering and related fields.

\section{Conclusion and Future Work}
\label{sec:conclusion}

In this work, we have developed a comprehensive framework for dual attainment in the multi-marginal, multi-asset martingale optimal transport (MOT) problem. By extending duality and existence results to arbitrary numbers of assets and time periods, we have provided both theoretical and practical foundations for robust pricing and hedging of complex financial derivatives. Our analysis demonstrates that, under mild regularity and irreducibility conditions, dual optimizers exist and can be constructed, enabling the replication and sub-/super-hedging of path-dependent payoffs in multi-dimensional settings.

On the computational side, we have shown that the primal-dual linear programming (PDLP) approach, implemented within NVIDIA cuOpt and executed on GPU hardware, is highly effective for solving large-scale discrete MOT problems. Leveraging GPU acceleration was essential for handling the high-dimensional, path-dependent structure of these problems, enabling solution times and scalability that would be infeasible on standard CPU-based solvers. Our numerical experiments confirm that the PDLP solver achieves near-optimality and enforces intricate marginal and martingale constraints with high precision, even for challenging products such as worst-of autocallable options.

The martingale optimal transport framework thus offers a powerful and flexible tool for quantitative finance, supporting model calibration, robust pricing, and risk management in environments with limited model assumptions and rich market data. The demonstrated effectiveness of GPU-accelerated computation further expands the practical applicability of MOT to real-world, large-scale financial problems.

\paragraph{Future work} 
There remain several promising directions for further research in the study of multi-marginal, multi-asset martingale optimal transport. One avenue is to investigate how additional structural properties of the payoff function may be leveraged to refine theoretical bounds and enhance the robustness of pricing results.
Another important direction is the development of more efficient computational techniques, particularly those that exploit problem-specific features to improve scalability and reduce dimensionality in large-scale numerical implementations. Advancing these aspects could lead to significant improvements in both the theoretical understanding and practical applicability of the MOT framework in high-dimensional financial settings. We leave these and related questions for future work.

\section{Proofs of Main Results}\label{sec:proofs}

The assumption  $|c(x)| \le \sum_{t, i} v_{t,i}(x_{t,i})$ for some continuous $v_{t,i} \in L^1(\mu_{t,i})$ ensures $P(c)=D(c)$ in \eqref{duality} (see e.g. \cite{Z}). Clearly, a dual optimizer exists for $c(x)$ if and only if so does for $\tilde c (x):= c(x) - \sum_{t,i} v_{t,i}$. Thus by replacing $c$ with $\tilde c$, from now on we can (and will) assume that $c \le 0$ throughout the proofs.

\begin{proof}[Proof of Lemma \ref{L1bound*}]
Let $(\phi_n, h_n)_{n \in \N}$ be an approximating dual maximizer satisfying \eqref{dual}, \eqref{maximizing}, where  $D(c)=P(c) \in \R$ and $c \le 0$ without loss of generality. Denote $\phi_{t,n}^\oplus (x_t) =\sum_{i=1}^d  \phi_{t,i,n} (x_{t,i})$,  $h_{t,n}(\bar x_t) = \big{(} h_{t,1,n}(\bar x_t),..., h_{t,d,n}(\bar x_t)\big{)}$. Define
\begin{align}
\label{chidefinition}
\chi_{t,n}(x_{t}):= \sup_{x_1,...,x_{t-1}} \sum_{s=1}^{t-1} \big( \phi_{s,n}^\oplus (x_s) + h_{s,n}(\bar x_s) \cdot \Delta x_s \big)
\end{align}
with the convention $\chi_{1,n} = \chi_{N+1, n} \equiv 0$. Notice $\chi_{t,n}$ is a convex function on $\R^d$, since it is a supremum of affine functions of $x_t$. We claim
\begin{align}\label{chiineq}
\chi_{t,n} \le \chi_{t+1,n} - \phi_{t,n}^\oplus \ \text{ for all } t \in [N] \text{ and } n \in \N.
\end{align}
This inequality can be shown as follows:
\begin{align*}
\chi_{t+1,n}(x_{t+1}) &= \sup_{x_1,...,x_{t}} \sum_{s=1}^{t} \big( \phi_{s,n}^\oplus (x_s) + h_{s,n}(\bar x_s) \cdot \Delta x_s \big) \\
& \ge \sup_{\substack{x_1,...,x_{t-1} \\ x_{t} = x_{t+1}}} \sum_{s=1}^{t-1} \big( \phi_{s,n}^\oplus (x_s) + h_{s,n}(\bar x_s) \cdot \Delta x_s \big) +  \phi_{t,n}^\oplus (x_{t+1})\\
&= \chi_{t,n}(x_{t+1}) + \phi_{t,n}^\oplus (x_{t+1}),
\end{align*}
while the inequality $\chi_{N,n} \le -\phi^\oplus_{N,n}$ directly follows from $c \le 0$. Let $\mu^\otimes_t = \mu_{t,1} \otimes \dots \otimes \mu_{t,d}$ denote the product measure. We claim the following bound
\begin{align}\label{intbypart}
\int \chi_{t, n} \,d(\mu_t^\otimes - \mu_{t-1}^\otimes) \le C \ \text{ for all } t=2,...,N \text{ and } n \in \N,
\end{align}
where $C$ is independent of $n$ throughout the proof. 

To show \eqref{intbypart}, observe that repeated application of \eqref{chiineq} gives
\begin{align*}
\mu_t^\otimes(\chi_{t,n}) &\le \mu_t^\otimes(\chi_{t+1,n}) - \mu_t^\otimes(\phi_{t,n}^\oplus) \\
&\le \mu_{t+1}^\otimes(\chi_{t+1,n}) - \mu_t^\otimes(\phi_{t,n}^\oplus) \\
&\le \mu_{t+1}^\otimes(\chi_{t+2,n}) - \mu_{t+1}^\otimes (\phi_{t+1,n}^\oplus) - \mu_t^\otimes(\phi_{t,n}^\oplus) \\
&\le \dots \le - \sum_{s=t}^N \mu_s^\otimes (\phi_{s,n}^\oplus), 
\end{align*}
where the second inequality is due to $\mu_t^\otimes \preceq_c \mu_{t+1}^\otimes$ and convexity of $\chi$. Similarly, 
\begin{align*}
\mu_{t-1}^\otimes(\chi_{t,n}) &\ge \mu_{t-1}^\otimes(\chi_{t-1,n}) + \mu_{t-1}^\otimes(\phi_{t-1,n}^\oplus) \\
&\ge \mu_{t-2}^\otimes(\chi_{t-1,n}) + \mu_{t-1}^\otimes(\phi_{t-1,n}^\oplus) \\
&\ge \mu_{t-2}^\otimes(\chi_{t-2,n}) + \mu_{t-2}^\otimes(\phi_{t-2,n}^\oplus) + \mu_{t-1}^\otimes(\phi_{t-1,n}^\oplus) \\
&\ge \dots \ge  \sum_{s=1}^{t-1} \mu_s^\otimes (\phi_{s,n}^\oplus).
\end{align*}
The two inequalities, along with $\mu_{t-1}^\otimes(\chi_{t,n}) \le \mu_{t}^\otimes(\chi_{t,n})$ due to $\mu_{t-1}^\otimes \preceq_c \mu_{t}^\otimes$, imply
\begin{align}\label{chiineq2}
\int \chi_{t, n} \,d(\mu_t^\otimes - \mu_{t-1}^\otimes) \le  - \sum_{s=1}^{N} \mu_s^\otimes (\phi_{s,n}^\oplus) = - \mu (\phi_n)
\end{align}
which, in conjunction with \eqref{maximizing},  yields \eqref{intbypart} as claimed.

We can now obtain local uniform boundedness of $\{\chi_{t,n}\}_n$ using \eqref{intbypart} and \cite[Proposition A.1]{Lim23}. We need to meet the proposition's second condition. For this, fix any $a \in I_1 := I_{1,1} \times \dots \times I_{1,d}$, and let $L_{2,n} : \R^d \to \R$ be an affine function satisfying $L_{2,n} \le \chi_{2,n}$ and $L_{2,n}(a) = \chi_{2,n}(a)$. By linearity, we have $L_{2,n}(x_2) = \nabla L_{2,n}(x_1) \cdot (x_2 - x_1) + L_{2,n}(x_1)$. This allows us to modify the given approximating dual maximizer $(\phi_n, h_n)_{n \in \N}$ by replacing $\phi_{1,n}^\oplus(x_1)$ with $\phi_{1,n}^\oplus(x_1)- L_{2,n}(x_1)$, $\phi_{2,n}^\oplus(x_2)$ with $\phi_{2,n}^\oplus(x_2) + L_{2,n}(x_2)$, and $h_{1,n}(x_1)$ with $h_{1,n}(x_1) - \nabla L_{2,n}(x_1)$ ($\nabla L_{2,n}$ is constant and does not depend on $x_1$). Notice this yields $\chi_{2,n} \ge 0$  and $\chi_{2,n}(a) =0$. We can continue subtracting appropriate linear functions $L_{t,n}$, $t=2,...,N$, and achieve
\begin{align}\label{convex0}
\chi_{t,n} \ge 0 \ \text{ and } \ \chi_{t,n}(a) =0 \quad \text{for all } n \in \N \text{ and } t \in [N].
\end{align}
Note that the modifications have no effect on the value $\mu(\phi_n)$.

Next, let $\{\epsilon_k\}_k$ be a positive decreasing sequence tending to zero as $k \to \infty$, and write $I_{t,i} = ]a_{t,i}, b_{t,i}[$ where $-\infty \le a_{t,i} < b_{t,i} \le +\infty$. Then we define the compact interval $J_{t,i,k}:=[c_{t,i,k}, d_{t,i,k}]$ for $t \in [N-1]$, $i \in [d]$ and $k \in \N$ as follows:
\begin{align}\label{J}
&\text{If } a_{t,i} > -\infty, \text{ then define $c_{t,i,k }$ by the following rule:}\\
&\ \mu_{t+1,i} (a_{t,i})=0 \Rightarrow c_{t,i,k }:= a_{t,i} + \epsilon_k; \ \ \mu_{t+1,i} (a_{t,i}) > 0 \Rightarrow c_{t,i,k} := a_{t,i}; \nn\\
&\text{If } b_{t,i} < +\infty, \text{ then define $d_{t,i,k }$ by the following rule:}\nn\\
&\ \mu_{t+1,i} (b_{t,i})=0 \Rightarrow d_{t,i,k }:= b_{t,i} -  \epsilon_k; \ \ \mu_{t+1,i} (b_{t,i})>0 \Rightarrow d_{t,i,k }:= b_{t,i};  \nn\\
&\text{If }  a_{t,i} = -\infty, \text{ then } c_{t,i,k} := -1/ \epsilon_k; \nn\\
&\text{If }  b_{t,i} = +\infty, \text{ then }  d_{t,i,k} := +1/\epsilon_k. \nn
\end{align}
Set $J_{0,i,k} := J_{1,i,k} $. For example, if $\mu_{t+1,i} (a_{t,i}) = 0$ and $\mu_{t+1,i} (b_{t,i}) > 0$, then $J_{t,i,k} = [a_{t,i} + \epsilon_k, b_{t,i}]$. Let $\epsilon_1$ be so small so that $\mu_{t,i} (J_{t,i,1})>0$, $\mu_{t+1,i} (J_{t,i,1}) > 0$ for every $t,i$. Observe that $J_{t,i,k} \nearrow  J_{t,i}$ as $k \to \infty$. Let $J_{t,k} := J_{t,1,k} \times J_{t,2,k} \times ... \times J_{t,d,k}$. Then \cite[Proposition A.1]{Lim23} with \eqref{intbypart} and \eqref{convex0} yields that there exists $M_{k} \ge 0$ for each $k \in \N$ such that 
\begin{align}\label{boundconvex}
0 \le \sup_{n \in \N} \sup_{x_t \in J_{t-1,k}}  \chi_{t,n}(x_t) \le M_{k}.
\end{align}
Next, another repeated application of \eqref{chiineq} gives
\begin{align*}
C &\ge  -\sum_{s=1}^{N} \mu_s^\otimes (\phi_{s,n}^\oplus)\\
&\ge  \mu_N^\otimes (\chi_{N,n}) -\sum_{s=1}^{N-1} \mu_s^\otimes (\phi_{s,n}^\oplus) \\
&\ge  \mu_{N-1}^\otimes (\chi_{N,n}) -\sum_{s=1}^{N-1} \mu_s^\otimes (\phi_{s,n}^\oplus) \\
&=\,  \mu_{N-1}^\otimes (\chi_{N,n}) -  \mu_{N-1}^\otimes (\chi_{N-1,n}) +  \mu_{N-1}^\otimes (\chi_{N-1,n}) -\sum_{s=1}^{N-1} \mu_s^\otimes (\phi_{s,n}^\oplus) \\
&=\,  \parallel \chi_{N,n} -\chi_{N-1,n} - \phi_{N-1,n}^\oplus \parallel_{L^1( \mu_{N-1}^\otimes )}  +  \mu_{N-1}^\otimes (\chi_{N-1,n}) -\sum_{s=1}^{N-2} \mu_s^\otimes (\phi_{s,n}^\oplus) \\
&\ge\,  \parallel \chi_{N,n} -\chi_{N-1,n} - \phi_{N-1,n}^\oplus \parallel_{L^1( \mu_{N-1}^\otimes )}  +  \mu_{N-2}^\otimes (\chi_{N-1,n}) -\sum_{s=1}^{N-2} \mu_s^\otimes (\phi_{s,n}^\oplus) \\
&\ge \dots \ge  \sum_{s=2}^N \parallel \chi_{s,n} -\chi_{s-1,n} - \phi_{s-1,n}^\oplus \parallel_{L^1( \mu_{s-1}^\otimes )}
\end{align*}
where the third and sixth inequality is due to the  convexity of $\chi$ with $\mu_t^\otimes \preceq_c \mu_{t+1}^\otimes$, and the fifth equality is by the nonnegativity $\chi_{t,n} -\chi_{t-1,n} - \phi_{t-1,n}^\oplus \ge 0$ from \eqref{chiineq}. 

On the other hand, the nonpositivity $c \le 0$ in \eqref{dual} yields
\begin{align*}
\sum_{s=1}^{N} \big( \phi_{s,n}^\oplus (x_s) + h_{s,n}(\bar x_s) \cdot \Delta x_s \big) \le \chi_{N,n} (x_N) + \phi^\oplus_{N,n} (x_N) \le 0
\end{align*}
where the first inequality follows by taking supremum over $x_1,...,x_{N-1}$ (recall \eqref{chidefinition} and the convention $h_{N,n} \equiv 0$). Integrating with any $\pi \in {\rm VMT}(\mu)$ yields $\parallel \chi_{N,n} (x_N) + \phi^\oplus_{N,n}\parallel_{L^1( \mu_{N}^\otimes )} \le -\sum_{s=1}^{N} \mu_s^\otimes (\phi_{s,n}^\oplus) \le C$. We therefore deduce
\be\label{L1bound}
\parallel \chi_{t+1,n} -\chi_{t,n} -
 \phi_{t,n}^\oplus \parallel_{L^1( \mu_{t}^\otimes )}\, \le C \ \text{ for all } n \in \N, t \in [N].
\ee
Now for each $k \in \N$, let $\mu_{t,i,k}$ be the restriction of $\mu_{t,i}$ on $J_{t-1,i,k}$ (where $J_{0,i,k} := J_{1,i,k}$) then normalized to be a probability distribution, as defined in Lemma \ref{L1bound*}. Let $\mu^\otimes_{t,k} = \otimes_i \mu_{t,i,k}$, so that $\mu^\otimes_{t,k}(J_{t-1,k})=1$ where $J_{t-1,k} := \otimes_i J_{t-1,i,k}$. Define
\begin{align} %
    v_{t,i,k,n} := \int \phi_{t,i,n} \,d\mu_{t,i,k}, \quad t \in [N], i \in [d], k\in \N, n \in \N. \nn
\end{align}
To complete the proof of Lemma \ref{L1bound*}, that is to show for each $k \in \N$
\begin{align}\label{supbound}
\sup_n \parallel \phi_{t,i,n} - v_{t,i,k,n} \parallel_{L^1(\mu_{t,i,k})} \, \le C,
\end{align}
observe that \eqref{boundconvex}, \eqref{L1bound} and the fact $J_{t-1,k} \subset J_{t,k}$ together imply
\begin{align*}
&\parallel \phi_{t,n}^\oplus \parallel_{L^1(\mu^\otimes_{t,k})} \\ 
&\le\,  \parallel \chi_{t+1,n} -\chi_{t,n} -
 \phi_{t,n}^\oplus \parallel_{L^1(\mu^\otimes_{t,k})} + \parallel \chi_{t,n} - \chi_{t+1,n}  \parallel_{L^1(\mu^\otimes_{t,k})} \\
&\le C + M_k =: C
\end{align*}
where we used $\chi_{t+1,n} -\chi_{t,n} -
 \phi_{t,n}^\oplus \ge 0$ to get the bound $ \parallel \chi_{t+1,n} -\chi_{t,n} -
 \phi_{t,n}^\oplus \parallel_{L^1(\mu^\otimes_{t,k})}\, \le C$ from \eqref{L1bound}. From this, we obtain the bound
\begin{align}\label{boundv}
\bigg| \sum_{i=1}^d v_{t,i,k,n} \bigg| \le \, \parallel \phi_{t,n}^\oplus \parallel_{L^1(\mu^\otimes_{t,k})}  \, \le C \, \text{ for all } n,
\end{align}
where the first inequality is by Jensen's inequality.
Next, because $\phi_{t,n}^\oplus \le M_{k}$ on $J_{t,k}$ by \eqref{chiineq}, \eqref{convex0} and \eqref{boundconvex}, by taking supremum over $J_{t,k}$, we have
\begin{align*}
\sum_{i=1}^d \sup_{x_{t,i} \in J_{t,i,k}} \phi_{t,i,n} (x_{t,i}) \le M_{k}  \, \text{ for all } n.
\end{align*}
In particular, since $v_{t,i,k,n} \le \sup_{x_{t,i} \in J_{t,i,k}} \phi_{t,i,n} (x_{t,i})$, 
\begin{align}\label{upperbd}
\sup_{x_{t,1} \in J_{t,1,k}} \phi_{t,1,n} (x_{t,1})+ \sum_{i=2}^d v_{t,i,k,n}  \le M_k.
\end{align}
Define  $\hat v_{t,1,k,n} := -\sum_{i=2}^d v_{t,i,k,n}$. Since $\phi_{t,n}^\oplus  \le M_{k}$ on $J_{t,k}$, we have
\begin{align*}
C \ge \,\,\parallel M_{k} - \phi_{t,n}^\oplus \parallel_{L^1(\mu^\otimes_{t,k})}\, = M_k - \int ( \phi_{t,1,n} + \sum_{i=2}^d v_{t,i,k,n}) d\mu_{t,1,k}.
\end{align*}
With \eqref{upperbd}, this implies that $ \sup_n \parallel  \phi_{t,1,n} - \hat v_{t,1,k,n} \parallel_{L^1(\mu_{t,1,k})}$ is bounded, and then by \eqref{boundv}, $ \sup_n \parallel  \phi_{t,1,n} - v_{t,1,k,n} \parallel_{L^1(\mu_{t,1,k})}$ is bounded. This proves \eqref{supbound}. 
\end{proof}

We turn to the proof of Proposition \ref{ptwiseconverge}.
\begin{proof}[Proof of Proposition \ref{ptwiseconverge} ]
Lemma \ref{L1bound*} and Koml{\'o}s lemma (which states that every $L^1$-bounded sequence of real functions contains a subsequence such that the arithmetic means of all its subsequences converge almost everywhere) implies that for each $k \in \N$, there exists a subsequence $\{\phi_{t,i,k,n}\}_n$ of $\{\phi_{t,i,n}\}_n$ such that

(i) $\{\phi_{t,i,k+1,n}\}_n$ is a further subsequence of $\{\phi_{t,i,k,n}\}_n$, and

(ii) $\tilde \phi_{t,i,k,n}(x_{t,i}) - \tilde v_{t,i,k,n}$ converges $\mu_{t,i,k}$ - a.s. \, as \, $n \to \infty$,\\
where $v_{t,i,k,n} = \int \phi_{t,i,k,n} \,d\mu_{t,i,k}$, $\hat v_{t,1,k,n} = -\sum_{i=2}^d v_{t,i,k,n}$, $\tilde v_{t,1,k,n} =  \frac{1}{n} \sum_{m=1}^n \hat v_{t,1,k,m}$, $\tilde v_{t,i,k,n} =  \frac{1}{n} \sum_{m=1}^n v_{t,i,k,m}$ for $i \ge 2$, and $\tilde \phi_{t,i,k,n}= \frac{1}{n} \sum_{m=1}^n \phi_{t,i,k,m}$. Note that for each $k$, our choice of a subsequence index can be made identical for every $t, i$,  since there are finitely many indices of $t,i$. Then we select the diagonal sequence  
\[
\Phi_{t,i,n} := \phi_{t,i,n,n}
\]
 and again define 
  \begin{align*}
 w_{t,i,k,n} &= \int \Phi_{t,i,n} d\mu_{t,i,k}, \ \ \tilde w_{t,i,k,n} =  \frac{1}{n} \sum_{m=1}^n w_{t,i,k,m},  \q 2\le i \le d,\\
 \hat w_{t,1,k,n} &= -\sum_{i=2}^d w_{t,i,k,n}, \ \  \tilde w_{t,1,k,n} =  \frac{1}{n} \sum_{m=1}^n \hat w_{t,1,k,m}, \ \ \tilde \Phi_{t,i,n}(x_{t,i}) = \frac{1}{n} \sum_{m=1}^n \Phi_{t,i,m} (x_{t,i}).
  \end{align*}
 We then claim:
 \begin{align}\label{convergePhi}
\tilde \Phi_{t,i,n}(x_{t,i})  - \tilde w_{t,i,1,n} \, \text{converges } \, \mu_{t,i} - a.s. \, \text{ for all } t \in [N], i \in [d].
\end{align}
Note that the dependence on $k$ has now been removed. To prove the claim, since $\{\Phi_{t,i,n}\}_n$ is a subsequence of $\{\phi_{t,i,k,n}\}_n$ for every $k\in \N$,   Koml{\'o}s lemma implies
 \begin{align}\label{convergePhik}
\tilde \Phi_{t,i,n}(x_{t,i})  - \tilde w_{t,i,k,n} \ \text{converges } \, \mu_{t,i,k} - a.s. \, \text{ for all } t \text{ and } i. 
\end{align}
In particular, both $\{\tilde \Phi_{t,i,n}(x_{t,i})  - \tilde w_{t,i,1,n}\}_n$ and $\{\tilde \Phi_{t,i,n}(x_{t,i})  - \tilde w_{t,i,k,n}\}_n$ converge $ \mu_{t,i,1}$-a.s. as $n\to \infty$, so does the difference $ \{\tilde w_{t,i,1,n} -  \tilde w_{t,i,k,n}\}_n$ for each  $k$. Hence \eqref{convergePhi} follows from \eqref{convergePhik}. The identity $\sum_{i=1}^d \tilde w_{t,i,1,n} = 0$ then allows to replace the approximating dual maximizer $(\phi_n, h_n)_{n}$ by $(\psi_n, \tilde h_n)_{n}$, where $\psi_{t,i,n} :=  \tilde \Phi_{t,i,n}(x_{t,i})  - \tilde w_{t,i,1,n}$ and $\tilde h_{t,i,n}$ is the corresponding Ces{\`a}ro mean of the subsequence of $(h_{t,i,n})_n$ chosen consistently with the selection of $\{\Phi_{t,i,n}\}_n$ out of $\{ \phi_{t,i,n}\}_n$.
\end{proof}

Finally, we turn to the proof of Theorem \ref{main}.

\begin{proof}[Proof of Theorem \ref{main}]

By Proposition \ref{ptwiseconverge}, we have an approximating dual maximizer $(\phi_n, h_n)_{n}$ such that
\be\label{asconverge}
\{ \phi_{t,i,n} \}_n \text{ converges to a function } \phi_{t,i} \ \mu_{t,i}-a.s. \text{ as } n \to \infty \ \text{ for each $t \in [N]$ and $i \in [d]$}.
\ee
We shall prove the convergence of the convex functions $\{\chi_{t,n}\}_n$ defined in \eqref{chidefinition}.

In the proof of Lemma \ref{L1bound*}, we obtained an approximating dual maximizer (say $(\phi^0_n, h^0_n)_{n}$) such that the associated $\{\chi^0_{t,n}\}_n$ satisfied the local bound \eqref{boundconvex}. As observed in the proof of Proposition \ref{ptwiseconverge}, the approximating dual maximizer $(\phi_n, h_n)_{n}$ satisfying \eqref{asconverge} can be taken as a Ces{\`a}ro mean of a suitable subsequence of $(\phi^0_n, h^0_n)_{n}$. This implies the upper bound in \eqref{boundconvex} continues to hold
\begin{align}\label{boundconvex1}
\sup_n \chi_{t,n} \le M_{k} \ \ \text{on} \ \ J_{t-1,k}.
\end{align}
We shall modify $(\phi_n, h_n)_{n}$ so that \eqref{convex0} holds while \eqref{boundconvex1} is retained. Let $I_t := I_{t,1} \times \dots \times I_{t,d}$. Fix any $a \in I_1$ and define $\phi^\oplus_t := \sum_{i} \phi_{t,i}$. First, \eqref{asconverge}  implies that there exists $a_t \in I_t$ for every $t \in [N]$ such that 
\be
\lim_{n \to \infty} \phi_{t,n}^\oplus (a_t) = \phi_t^\oplus (a_t) \in \R.
\ee
In view of \eqref{chiineq} which gives $\phi^\oplus_{1,n} \le \chi_{2,n}$, this implies
\be\label{goodlowbound}
\inf_n \chi_{2,n} (a_1) > -\infty.
\ee
On the other hand, since $I_1 = {\rm int} (J_1)$ and $J_{1,k} \nearrow J_{1}$ (see \eqref{J} for the definition of compact intervals $J_{t,i,k}$), for large enough $k$ we have $\{a, a_1\} \subset {\rm int}(J_{1,k})$. Now \eqref{boundconvex1}, \eqref{goodlowbound} imply that both $\chi_{2,n}(a)$ and $\nabla \chi_{2,n} (a)$ are uniformly bounded in $n$, where $\nabla \chi_{2,n} (a) \in \partial \chi_{2,n} (a)$ is  a subgradient  of the convex function $\chi_{2,n}$ at $a$. Hence by taking a subsequence, we can assume that $\{\chi_{2,n}(a)\}_n$ and $\{\nabla \chi_{2,n} (a)\}_n$ both converge. As in the proof of Lemma \ref{L1bound*}, define an affine function $L_{2,n}(y) = \chi_{2,n}(a) + \nabla \chi_{2,n} (a) \cdot (y-a)$, and replace $\phi_{1,n}^\oplus(x_1)$ with $\phi_{1,n}^\oplus(x_1) - L_{2,n}(x_1)$, $\phi_{2,n}^\oplus(x_2)$ with $\phi_{2,n}^\oplus(x_2) + L_{2,n}(x_2)$, and finally $h_{1,n}(x_1)$ with $h_{1,n}(x_1) - \nabla \chi_{2,n}(a)$. This yields $\chi_{2,n}(a) = \nabla \chi_{2,n}(a) = 0$ for all $n$, while the a.s. convergence of $\phi_{t,i, n}$ and the bound \eqref{boundconvex1} are retained. Next, the inequality $\chi_{3,n} \ge \chi_{2,n} + \phi^\oplus_{2,n}$ with $\chi_{2,n} \ge 0$ yields $\inf_n \chi_{3,n}(a_2) > -\infty$. Thus we can repeat the argument and achieve the normalization \eqref{convex0}, while the a.s. convergence of $\phi_{t,i, n}$ and the bound \eqref{boundconvex1} are retained. Thanks to the local bound \eqref{boundconvex} and \eqref{intbypart}, we can now apply \cite[Proposition A.1]{Lim23} to deduce the pointwise convergence $\chi_{t,n} \to \chi_t$ on $J_{t-1}$.

As the limit function $(\phi_{t,i})_{t,i}$ is only well defined $\mu_{t,i}$-a.s., define $\phi_{t,i} := -\infty$ on a $\mu_{t,i}$-null set (which includes $\R \setminus I_{t,i}$), so that they are defined everywhere on $\R$. We now show that there exist functions $h_t = (h_{t,i})_{i} : \R^{td} \to \R^d$ for all $t \in [N]$ with $h_N \equiv 0$, such that 
\begin{align}\label{duallimit}
\sum_{t=1}^{N} \big( \phi_{t}^\oplus (x_t) + h_{t}(\bar x_t) \cdot \Delta x_t \big) \le c(x) \ \ \text{for all } x \in \R^{Nd}.
\end{align}
For any function $f : \R^d \to \R \cup \{+\infty\}$ which is bounded below by an affine function, let ${\rm conv}[f]:\R^d \to \R \cup \{+\infty\}$ denote the lower semi-continuous convex envelope of $f$, that is the supremum of all affine functions $l$ satisfying $ l \le f$ (If there is no such $l$, let ${\rm conv}[f] \equiv -\infty$.) We will inductively obtain $h_{N-1}, h_{N-2},...,h_1$. Let us rewrite \eqref{dual} as
\begin{align}\label{a1}
\sum_{t=1}^{N-1} \big( \phi_{t,n}^\oplus (x_t) + h_{t,n}(\bar x_t) \cdot \Delta x_t \big) \le c(x) -  \phi_{N,n}^\oplus (x_N).
\end{align}
Define $H_{N-1,n}(\bar x_{N-1},x_N) = {\rm conv}[c(\bar x_{N-1},\,\cdot\,) - \phi_{N,n}^\oplus (\,\cdot\,)](x_N)$. We then  have
\begin{align*}
\sum_{t=1}^{N-1} \big( \phi_{t,n}^\oplus (x_t) + h_{t,n}(\bar x_t) \cdot \Delta x_t \big) \le H_{N-1,n}(\bar x_{N-1},x_N) \le c(x) -  \phi_{N,n}^\oplus (x_N)
\end{align*}
because the left hand side is affine in $x_N$. If we let $x_N = x_{N-1}$, we get
\begin{align}\label{a2}
\sum_{t=1}^{N-2} \big( \phi_{t,n}^\oplus (x_t) + h_{t,n}(\bar x_t) \cdot \Delta x_t \big) \le H_{N-1,n}(\bar x_{N-1},x_{N-1}) -  \phi_{N-1,n}^\oplus (x_{N-1}).
\end{align}
We see that \eqref{a1} and \eqref{a2} exhibit a similar structure. With the convention $H_{N,n} (x) := c(x)$,
 this allows us to inductively deduce, backward in $t$,
 \begin{align}\label{a3}
\sum_{s=1}^{t} \big( \phi_{s,n}^\oplus (x_s) + h_{s,n}(\bar x_s) \cdot \Delta x_s \big) \le H_{t+1,n}(\bar x_{t+1},x_{t+1}) -  \phi_{t+1,n}^\oplus (x_{t+1})
\end{align}
for $t=1,...,N-1$, where
\begin{align}\label{a4}
H_{t,n}(\bar x_{t},x_{t+1}) := {\rm conv}[H_{t+1,n}(\bar x_{t},\,\cdot\,) - \phi_{t+1,n}^\oplus (\,\cdot\,)](x_{t+1})
\end{align}
with an abuse of notation $H_{t+1,n}(\bar x_{t},x_{t+1}) := H_{t+1,n}(\bar x_{t+1},x_{t+1}) $. 

Now by dropping the index $n$, we analogously define
\begin{align}\label{a5}
H_{t}(\bar x_{t},x_{t+1}) := {\rm conv}[H_{t+1}(\bar x_{t},\,\cdot\,) - \phi_{t+1}^\oplus (\,\cdot\,)](x_{t+1})
\end{align}
with $H_{N} (x) := c(x)$. Next, since the $\limsup$ of convex functions is convex, in conjunction with the almost sure convergence \eqref{asconverge}, we have
\begin{align*}
\limsup_{n \to \infty} H_{N-1,n}(\bar x_{N-1},x_N) 
&= \limsup_{n \to \infty} {\rm conv}[c(\bar x_{N-1},\,\cdot\,) - \phi_{N,n}^\oplus (\,\cdot\,)](x_N) \\
&\le {\rm conv}[\limsup_{n \to \infty}\big{(} c(\bar x_{N-1},\,\cdot\,) - \phi_{N,n}^\oplus (\,\cdot\,) \big{)}](x_N)  \\
&\le {\rm conv}[c(\bar x_{N-1},\,\cdot\,) - \phi_{N}^\oplus (\,\cdot\,)](x_N) \\
&= H_{N-1}(\bar x_{N-1},x_N).
\end{align*}
This allows us to inductively deduce, for $t=1,...,N-1$,
\begin{align*}
\limsup_{n \to \infty} H_{t,n}(\bar x_{t},x_{t+1}) 
&= \limsup_{n \to \infty} {\rm conv}[H_{t+1,n}(\bar x_{t},\,\cdot\,) - \phi_{t+1,n}^\oplus (\,\cdot\,)](x_{t+1}) \\
&\le {\rm conv}[\limsup_{n \to \infty}\big{(} H_{t+1,n}(\bar x_{t},\,\cdot\,) - \phi_{t+1,n}^\oplus (\,\cdot\,)\big{)}](x_{t+1})  \\
&\le {\rm conv}[H_{t+1}(\bar x_{t},\,\cdot\,) - \phi_{t+1}^\oplus (\,\cdot\,)](x_{t+1}) \\
&= H_{t}(\bar x_{t},x_{t+1}).
\end{align*}
Let us discuss continuity of the convex function $x_{t+1} \mapsto H_{t}(\bar x_{t}, x_{t+1})$. The following inequality from \eqref{a5}
\be
H_{t}(\bar x_{t},x_{t+1}) \le H_{t+1}(\bar x_{t+1}, x_{t+1}) - \phi_{t+1}^\oplus (x_{t+1}) \nn 
\ee 
becomes $H_{N-1}(\bar x_{N-1},x_{N}) \le c(x) - \phi_{N}^\oplus (x_{N})$ when $t=N-1$. Then the $\mu_N^\otimes$-a.s. finiteness of $\phi_{N}^\oplus$ implies, by convexity, $H_{N-1}(\bar x_{N-1},x_{N}) < \infty$ if $x_N \in J_{N-1}$.  Backward induction in $t$ then gives $H_{t}(\bar x_{t},x_{t+1}) < \infty$ if $x_{t+1} \in J_{t}$. This implies that if there exists $y_0 \in \R^d$ such that $H_{t}(\bar x_{t}, y_0) > -\infty$, then $y \mapsto H_{t}(\bar x_{t}, y)$ is real-valued and continuous in $J_{t}$. This being observed, now \eqref{a3},  \eqref{a4} gives
\begin{align*}
\phi_{1,n}^\oplus (x_1) + h_{1,n}(x_1) \cdot \Delta x_1 
 \le H_{1,n}( x_{1},x_{2}) \le H_{2,n}( x_{1},x_{2}, x_{2}) -  \phi_{2,n}^\oplus (x_{2}).
\end{align*}
Letting $x_2=x_1$ gives $\phi_{1,n}^\oplus (x_1) \le H_{1,n}( x_{1},x_{1})$. Taking $\limsup$ yields
\begin{align*}
\phi_{1}^\oplus (x_1) \le H_{1}( x_{1},x_{1}) \ \text{ and } \ H_{1}( x_{1},x_{2}) \le H_{2}( x_{1},x_{2},x_{2}) -  \phi_{2}^\oplus (x_{2}). 
\end{align*}
Define $A_t:= \{ x_t \in \R^d \, | \, \phi_{t}^\oplus (x_t) \in \R \}$ for each $t \in [N]$, and note that $A_t \subset I_t$. Since $x_2 \mapsto H_1(x_1,x_2)$ is continuous in $J_1$ for every $x_1 \in A_1$, the subdifferential $\partial H_1(x_1, \,\cdot\,) (x_2)$ is nonempty, convex and compact for every $x_2 \in I_1 = {\rm int}(J_1)$. This allows us to choose a measurable function $h_1 : A_1 \to \R^d$ satisfying $h_1(x_1) \in \partial H_1(x_1, \,\cdot\,)(x_1)$. Then for $x_1 \in A_1$, we have
\begin{align*}
\phi_{1}^\oplus (x_1) + h_1(x_1) \cdot (x_2-x_1) &\le H_1(x_1,x_1) + h_1(x_1) \cdot (x_2-x_1) \\
 &\le H_1(x_1,x_2) \\
 &\le H_{2}( x_{1},x_{2}, x_{2}) -  \phi_{2}^\oplus (x_{2}). 
 \end{align*}
In particular, for $x_1 \in A_1$ and $x_2 \in A_2$, it holds $H_{2}( x_{1},x_{2}, x_{2}) > -\infty$. Hence again we can choose $h_2 : A_1 \times A_2 \to \R^d$ that satisfies $h_2(x_1,x_2) \in \partial H_2 (x_1, x_2, \,\cdot\,)(x_2)$. Then for every $x_1 \in A_1$ and $x_2 \in A_2$, we have
\begin{align*}
&\phi_{1}^\oplus (x_1) +  \phi_{2}^\oplus (x_{2}) + h_1(x_1) \cdot (x_2-x_1) + h_2(x_1,x_2) \cdot (x_3-x_2) \\&\le H_{2}( x_{1},x_{2}, x_{2}) + h_2(x_1,x_2) \cdot (x_3-x_2)\\
&\le H_{2}( x_{1},x_{2}, x_{3})\\
&\le H_{3}( x_{1},x_{2}, x_{3}, x_{3}) - \phi_{3}^\oplus (x_{3}).
 \end{align*}
By induction in $t$, we obtain $h_t : A_1 \times \dots \times A_t \to \R^d$ (with $h_N \equiv 0$), satisfying \eqref{duallimit} as desired. We may define $h_t = 0$ in $\R^{td} \setminus A_1 \times \dots \times A_t$, noting that the left hand side of \eqref{duallimit} is $-\infty$ if $x_t \notin A_t$ for some $t$.

The last step is to show the following claim: for any functions $h_t : \R^{td} \to \R^d$, $t=1,...,N$ with $h_N \equiv 0$ satisfying  \eqref{duallimit} (whose existence was just shown), and for any minimizer $\pi^*$ for the VMOT problem \eqref{VMOT}, it holds
\begin{align}\label{pointwisedualeq}
\sum_{t=1}^{N} \big( \phi_{t}^\oplus (x_t) + h_{t}(\bar x_t) \cdot \Delta x_t \big) = c(x), \quad \pi^* - a.s..
\end{align}
In other words, every minimizer $\pi^*$ is concentrated on the contact set
\[
\Gamma := \bigg\{x \in \R^{Nd} \,\bigg|\, \sum_{t=1}^{N} \big( \phi_{t}^\oplus (x_t) + h_{t}(\bar x_t) \cdot \Delta x_t \big) = c(x) \bigg\}
\]
whenever $\{h_t\}_t$ satisfies \eqref{duallimit}. This will complete the proof of the theorem.

The proof of this claim was provided in \cite{bnt} for the single-asset, two-period case $(d, N) = (1, 2)$, and we build upon that approach to extend to the general case. To begin, recall $\phi_{t,i,n} \to \phi_{t,i}$ $\mu_{t,i}$-a.s. and $\chi_{t,n} \to \chi_t$ in $J_{t-1}$ where $\chi_{t,n}$ is defined in \eqref{chidefinition} with $\chi_t$ being its limit. For any $\pi \in {\rm VMT}(\mu)$ (not necessarily an optimizer), we have $c \in L^1(\pi)$ by the assumption of Theorem \ref{main}. We claim:
\begin{align}\label{claim1}
\limsup_{n \to \infty}& \int \sum_{t=1}^{N} \big( \phi_{t,n}^\oplus (x_t) + h_{t,n}(\bar x_t) \cdot \Delta x_t \big)  d\pi  \\
&\le \int \sum_{t=1}^{N} \big( \phi_{t}^\oplus (x_t) + h_{t}(\bar x_t) \cdot \Delta x_t \big) d\pi. \nn
\end{align}
To see how the claim implies \eqref{pointwisedualeq}, let $\pi^*$ be any minimizer for \eqref{VMOT} (which exists by the assumption on $c$). Then %
$D(c) = P(c) = \int c \,d\pi^*$, hence 
\begin{align*}
P(c)&=\lim_{n \to \infty}  \int \sum_{t=1}^{N} \big( \phi_{t,n}^\oplus (x_t) + h_{t,n}(\bar x_t) \cdot \Delta x_t \big) d\pi^* \\ 
&\le \int \sum_{t=1}^{N} \big( \phi_{t}^\oplus (x_t) + h_{t}(\bar x_t) \cdot \Delta x_t \big) d\pi^* \\ 
&\le  \int c(x) \,d\pi^* = P(c)
\end{align*}
by Fatou's lemma, yielding equality throughout. Notice that this implies \eqref{pointwisedualeq}.

To prove \eqref{claim1}, fix any $\pi \in {\rm VMT}(\mu)$. Denoting $\pi = {\rm Law}(X)$ where $X=(X_1,...,X_N)$ is an $\R^d$-valued martingale under $\pi$, let $\pi_t := {\rm Law}(X_t)$. By following the argument which establishes \eqref{L1bound}, but this time utilizing the convex order $\pi_t \preceq_c \pi_{t+1}$ instead of $\mu^\otimes_t \preceq_c \mu^\otimes_{t+1}$, by \eqref{maximizing} and \eqref{chiineq}, we can deduce
\be\label{L1bound2}
\parallel \chi_{t+1,n} -\chi_{t,n} -
 \phi_{t,n}^\oplus \parallel_{L^1( \pi_{t})} \, \le C \ \text{ for all } n \in \N, t \in [N].
\ee
Using $\phi^\oplus_{t,n} \to \phi^\oplus_{t}$, $\chi_{t,n} \to \chi_t$ and Fatou's lemma, from \eqref{L1bound2} we deduce
\begin{align*}
&\chi_{t+1} -\chi_{t} -
 \phi_{t}^\oplus \in L^1(\pi_t), \ \text{ and}\\
\limsup_{n \to \infty} \int (&\phi_{t,n}^\oplus + \chi_{t,n} - \chi_{t+1,n}) \,d\pi_t \le  \int (\phi_{t}^\oplus + \chi_{t} - \chi_{t+1} )\,d\pi_t,
\end{align*}
recalling $\chi_{1,n} = \chi_{N+1,n} \equiv 0$ and $\pi_{t} (J_{t-1}) = 1$. This allows us to deduce
\begin{align}
\limsup_{n \to \infty}  &\int \sum_{t=1}^{N} \big( \phi_{t,n}^\oplus (x_t) + h_{t,n}(\bar x_t) \cdot \Delta x_t \big) d\pi \nn \\
= \limsup_{n \to \infty}  &\int \sum_{t=1}^{N} \big( \phi_{t,n}^\oplus (x_t) + \chi_{t,n}(x_t) - \chi_{t+1,n}(x_t) \nn \\
&\ \ \q\q\q - \chi_{t,n}(x_t) + \chi_{t+1,n}(x_t)+ h_{t,n}(\bar x_t) \cdot \Delta x_t \big) d\pi \nn \\
\le  \sum_{t=1}^N \int &\big( \phi_{t}^\oplus + \chi_{t} - \chi_{t+1} \big) d\pi_t  \label{a8} \\
 + \limsup_{n \to \infty} &\int \sum_{t=1}^{N-1} \big{(}  \chi_{t+1,n}(x_t)  - \chi_{t+1,n}(x_{t+1})  + h_{t,n}(\bar x_t) \cdot \Delta x_t  \big{)} \, d\pi, \nn
\end{align}
since $ \sum_{t=1}^N \chi_{t,n}(x_{t}) = \sum_{t=1}^N \chi_{t+1,n}(x_{t+1})$. Now denote $\bar X_t = (X_1,...,X_t)$, $\pi^t := {\rm Law}(\bar X_t) \in \cP(\R^{td})$. Then we can write $\pi^{t+1} = \pi_{\bar x_t} \otimes \pi^t$, where $\pi_{\bar x_t} \in \cP(\R^d)$ is the conditional distribution of $X_{t+1}$ given $\bar X_t = \bar x_t$ under the martingale law $\pi$. Martingale property means that $\int y \, d \pi_{\bar x_t}(y) = x_t$. For each $t$, choose a sequence of functions $\xi_{t,n} : I_t \to \R^d$ satisfying $\xi_{t,n}(x_t) \in \partial \chi_{t+1,n}(x_t)$. Then we compute
\begin{align*}
&\int \sum_{t=1}^{N-1} \big{(}  \chi_{t+1,n}(x_t)  - \chi_{t+1,n}(x_{t+1})  + h_{t,n}(\bar x_t) \cdot \Delta x_t  \big{)} \, d\pi \\
&= \int \bigg( \int \big( \chi_{N,n}(x_{N-1})  - \chi_{N,n}(x_{N})  + \xi_{N-1,n}( x_{N-1}) \cdot \Delta x_{N-1}\big)  d\pi_{\bar x_{N-1}}(x_N)\\
&\q\q\q + \sum_{t=1}^{N-2} \big(\chi_{t+1,n}(x_t)  - \chi_{t+1,n}(x_{t+1})  + h_{t,n}(\bar x_t) \cdot \Delta x_t  \big{)} \bigg)  d\pi^{N-1}(\bar x_{N-1}),
\end{align*}
where we replaced $h_{N-1,n}(\bar x_{N-1})$ by $\xi_{N-1,n}(\bar x_{N-1})$ due to the martingale property
\begin{align}\label{a7}
\int h_{N-1,n}(\bar x_{N-1}) \cdot \Delta x_{N-1} \, d\pi_{\bar x_{N-1}}(x_N) = \int \xi_{N-1,n}( x_{N-1}) \cdot \Delta x_{N-1} \, d\pi_{\bar x_{N-1}}(x_N) =0. \nn
\end{align}
By definition of $\xi$, we have $\chi_{N,n}(x_{N-1})  - \chi_{N,n}(x_{N})  + \xi_{N-1,n}( x_{N-1}) \cdot \Delta x_{N-1} \le 0$. 
This allows us to disintegrate $\pi^{N-1} = \pi_{\bar x_{N-2}} \otimes \pi^{N-2}$ and repeat the same argument
\begin{align*}
&\int \sum_{t=1}^{N-1} \big{(}  \chi_{t+1,n}(x_t)  - \chi_{t+1,n}(x_{t+1})  + h_{t,n}(\bar x_t) \cdot \Delta x_t  \big{)} \, d\pi \\
&= \int\dots \int \big( \chi_{N,n}(x_{N-1})  - \chi_{N,n}(x_{N})  + \xi_{N-1,n}( x_{N-1}) \cdot \Delta x_{N-1}\big)  d\pi_{\bar x_{N-1}}(x_N)\\
&\ + \big( \chi_{N-1,n}(x_{N-2})  - \chi_{N-1,n}(x_{N-1})  + \xi_{N-2,n}( x_{N-2}) \cdot \Delta x_{N-2}\big)  d\pi_{\bar x_{N-2}}(x_{N-1})\\
&\ +\dots + \big( \chi_{2,n}(x_{1})  - \chi_{2,n}(x_{2})  + \xi_{1,n}( x_{1}) \cdot \Delta x_{1}\big)  d\pi_{\bar x_{1}}(x_{2}) d\pi^1(x_1).
\end{align*}
Since $\chi_{t+1,n}(x_{t})  - \chi_{t+1,n}(x_{t+1})  + \xi_{t,n}( x_{t}) \cdot \Delta x_{t} \le 0$ for all $t$, repeated application of Fatou's lemma allows $\limsup$ to continue to penetrate into the innermost integral. Using $\limsup (a_n + b_n) \le \limsup a_n + \limsup b_n$, this eventually yield
\begin{align*}
&\limsup_{n\to\infty} \int \sum_{t=1}^{N-1} \big{(}  \chi_{t+1,n}(x_t)  - \chi_{t+1,n}(x_{t+1})  + h_{t,n}(\bar x_t) \cdot \Delta x_t  \big{)} \, d\pi \\
&\le \int\dots \int \big( \chi_{N}(x_{N-1})  - \chi_{N}(x_{N})  + \xi_{N-1}( x_{N-1}) \cdot \Delta x_{N-1}\big)  d\pi_{\bar x_{N-1}}(x_N)\\
&\ + \big( \chi_{N-1}(x_{N-2})  - \chi_{N-1}(x_{N-1})  + \xi_{N-2}( x_{N-2}) \cdot \Delta x_{N-2}\big)  d\pi_{\bar x_{N-2}}(x_{N-1})\\
&\ +\dots + \big( \chi_{2}(x_{1})  - \chi_{2}(x_{2})  + \xi_{1}( x_{1}) \cdot \Delta x_{1}\big)  d\pi_{\bar x_{1}}(x_{2}) d\pi^1(x_1)
\end{align*}
for some $\xi_t(x_t)  \in \partial \chi_{t+1}(x_t)$ which is a limit point of the bounded sequence $\{ \xi_{t,n}(x_t)\}_n$. Substituting $\xi_t(x_t)$ back to $h_t(\bar x_t)$ in the above inequality yields
\begin{align*}
&\limsup_{n\to\infty} \int \sum_{t=1}^{N-1} \big{(}  \chi_{t+1,n}(x_t)  - \chi_{t+1,n}(x_{t+1})  + h_{t,n}(\bar x_t) \cdot \Delta x_t  \big{)} \, d\pi \\
&\le  \int \sum_{t=1}^{N-1} \big{(}  \chi_{t+1}(x_t)  - \chi_{t+1}(x_{t+1})  + h_{t}(\bar x_t) \cdot \Delta x_t  \big{)} \, d\pi.
\end{align*}
Combining the integrals in \eqref{a8} then yields the claim \eqref{claim1}, hence the theorem.
\end{proof}

\section*{Acknowledgments}
YS thanks colleagues at the Global Technology Applied Research center of JPMorganChase for their support and helpful discussions. \\
CC thanks colleagues at Quantitative Trading \& Research of JPMorganChase for many fruitful discussions.

\section*{Disclaimer}
This paper was prepared for informational purposes with contributions from the Global Technology Applied Research center of JPMorgan Chase \& Co. This paper is not a product of the Research Department of JPMorgan Chase \& Co. or its affiliates. Neither JPMorgan Chase \& Co. nor any of its affiliates makes any explicit or implied representation or warranty and none of them accept any liability in connection with this paper, including, without limitation, with respect to the completeness, accuracy, or reliability of the information contained herein and the potential legal, compliance, tax, or accounting effects thereof. This document is not intended as investment research or investment advice, or as a recommendation, offer, or solicitation for the purchase or sale of any security, financial instrument, financial product or service, or to be used in any way for evaluating the merits of participating in any transaction.

\end{document}